\def\BibTeX{{\rm B\kern-.05em{\sc i\kern-.025em b}\kern-.08emT\kern-.1667em\lower.7ex\hbox{E}\kern-.125emX}}
\newcommand{\bzero}{{\mathbf{0}}}
\newcommand{\prev}{\mathtt{pr}}
\newcommand{\nxt}{\mathtt{nx}}
\DeclareMathOperator{\dist}{\text{\sf D}}
\newcommand{\ch}{\calC}
\newcommand{\trans}{\calT}
\newcommand{\poly}{\mathsf{poly}}
\newcommand{\pirC}{{\sf{DP\text{-}}\ir}}
\newcommand{\adv}{\calA}
\DeclareMathOperator{\E}{\mathbb{E}}
\newcommand{\range}{{\mathsf{Range}}}
\newcommand{\calA}{\mathcal{A}}
\newcommand{\calQ}{\mathcal{Q}}
\newcommand{\key}{\mathtt{key}}
\newcommand{\calT}{\mathcal{T}}
\newcommand{\calD}{\mathcal{D}}
\newcommand{\calC}{{\mathcal{C}}}
\DeclareMathOperator{\negl}{\mathtt{negl}}
\newcommand{\calS}{\mathcal{S}}
\newcommand{\textOp}{\text{\sf op}}
\newcommand{\tOp}{\textOp}
\newcommand{\opRead}{\text{\sf read}}
\newcommand{\opR}{\opRead}
\newcommand{\opWrite}{\text{\sf write}}
\newcommand{\opW}{\opWrite}
\newcommand{\ir}{\mathsf{IR}}
\newcommand{\ram}{\mathsf{RAM}}
\newcommand{\kvs}{\mathsf{KVS}}
\newcommand{\dpir}{{\mathsf{DP\text{-}\ir}}}
\newcommand{\edpir}{{\epsilon\mathsf{\text{-}\dpir}}}
\newcommand{\eddpir}{{(\epsilon, \delta)\mathsf{\text{-}\dpir}}}
\newcommand{\dpram}{{\mathsf{DP\text{-}\ram}}}
\newcommand{\oramC}{\dpram}
\newcommand{\edpram}{{\epsilon\mathsf{\text{-}\dpram}}}
\newcommand{\eddpram}{{(\epsilon,\delta)\mathsf{\text{-}\dpram}}}
\newcommand{\dpkvs}{{\mathsf{DP\text{-}KVS}}}
\newcommand{\edpkvs}{{\epsilon\mathsf{\text{-}DP\text{-}KVS}}}
\newcommand{\eddpkvs}{{(\epsilon,\delta)\mathsf{\text{-}\dpkvs}}}
\DeclareMathOperator{\Enc}{\text{\sf Enc}}
\DeclareMathOperator{\Dec}{\text{\sf Dec}}
\newcommand{\Init}{\text{\sf Setup}}
\newcommand{\server}{{\sf{server}}}
\newcommand{\client}{{\sf{client}}}
\DeclareMathOperator{\bStash}{\text{\sf bStash}}
\DeclareMathOperator{\Query}{\text{\sf Query}}
\newcommand{\textIndex}{\ensuremath{i}}
\begin{document}
\title{What Storage Access Privacy is Achievable with Small Overhead?}

\author{Sarvar Patel}
\affiliation{
  \institution{Google LLC}
}
\email{sarvar@google.com}

\author{Giuseppe Persiano}
\affiliation{
  \institution{Universit\`a di Salerno}
  \institution{Google LLC}
}
\email{giuper@gmail.com}

\author{Kevin Yeo}
\affiliation{
  \institution{Google LLC}
}
\email{kwlyeo@google.com}

%

\begin{abstract}

Oblivious RAM (ORAM) and private information retrieval (PIR)
are classic cryptographic primitives used to
hide the access pattern to data whose storage has been outsourced to
an untrusted server.
Unfortunately, both primitives require considerable overhead
compared to plaintext access. For large-scale storage infrastructure
with highly frequent access requests, the degradation in response time
and the exorbitant increase
in resource costs incurred by
either ORAM or PIR prevent their usage.
In an ideal scenario, a privacy-preserving storage protocols with
small overhead would be implemented for these heavily trafficked storage systems
to avoid negatively impacting either performance and/or costs.
In this work, we study the problem of the best {\em storage access privacy} that is achievable
with only {\em small overhead} over plaintext access.

To answer this question,
we consider {\em differential privacy access} which is a generalization
of the {\em oblivious access} security notion that are considered by ORAM and PIR.
Quite surprisingly, we present strong evidence that constant overhead storage schemes
may only be achieved with privacy budgets of $\epsilon = \Omega(\log n)$.
We present asymptotically optimal constructions for differentially private variants of
both ORAM and PIR with privacy budgets $\epsilon = \Theta(\log n)$ with
only $O(1)$ overhead. In addition, we consider a more complex storage primitive
called key-value storage in which data is indexed by keys from a large universe
(as opposed to consecutive integers in ORAM and PIR).
We present a differentially private key-value storage scheme with
$\epsilon = \Theta(\log n)$ and $O(\log\log n)$ overhead. This construction
uses a new oblivious, two-choice hashing scheme that may be of independent interest.


\end{abstract}

\maketitle

\section{Introduction}

Privacy-preserving storage primitives consider the outsourcing
of the storage of a database to an untrusted server with the ability for clients
to retrieve database records while maintaining
the privacy of the retrievals.
Even though encryption can be used to hide the content of the database,
patterns of how the encrypted data is accessed are revealed.
The leakage of access patterns has been
shown to compromise privacy in many important practical
settings~\cite{IKK12, CGP15}.
Privacy-preserving storage primitives that guarantee retrieval privacy
have been used as a critical component
in many systems such as advertisement~\cite{GLM16},
discovery of identities~\cite{BDG15}
and publish-subscribe~\cite{CSP16}.
Therefore, a very important question involves the construction
of privacy-preserving storage schemes guaranteeing retrieval privacy while
ensuring that record retrieval can be performed efficiently.

A common way to formulate the privacy of retrievals is {\em obliviousness}.
Obliviousness guarantees that for any two fixed sequences of record retrievals
of the same length,
any adversary that views all accesses to stored data
cannot determine which of the two sequences induced
the resulting access pattern to stored data.
Obliviousness has been considered with both statistical security,
providing protection from adversaries with unbounded computational resources,
and computational security,
where the adversary is assumed to be probabilistic polynomial time (PPT).
Oblivious RAM (ORAM) and private information retrieval (PIR) are two
oblivious storage primitives that have been the objective of extensive
research~\cite{Goldreich87,CGK95,GO96,KO97,CMS99,GR05,GM11,KLO12,SVS13,PPR18}.
For ORAM, it has been shown that $\Omega(\log n)$ overhead is
necessary~\cite{LN18}. On the other hand, the best constructions for PIR require
at least $\Omega(n)$ server computation over the entire outsourced database.

For integral storage infrastructure that handle many access requests per second,
the increased response time and server costs caused by the usage of
ORAM and PIR prevent their implementation in these important systems.
In a perfect world, there would exist a storage scheme with strong privacy with only
small overhead that could be implemented for these frequently accessed
storage infrastructures without negatively affecting the performance and/or expenses.
In our work, we address the natural question of the best privacy that may be achieved
by any storage scheme with only small overhead compared to plaintext access.
To our knowledge,
our work is the first that considers the question of maximizing privacy
for a specific efficiency goal. Previous works consider minimizing the
efficiency for primitives with a specific privacy notion.

To consider this problem, we use another formulation for storage access privacy
through {\em differential privacy}~\cite{DMN06, Dwork11, DR14}.
Typically, differential privacy is used in the context of
{\em privacy-preserving data analysis} where global properties of
the entire database are disclosed while
maintaining the privacy of individual database records.
In particular, differential privacy guarantees that any fixed disclosure
is just as likely, usually within a multiplicative factor, regardless
of whether an individual belonged to the sample population or not.
Our work focuses on the notion of {\em differentially private access}
which attempts to
maintain privacy for individual record retrievals,
but may reveal information about the entire sequence of retrievals.
Roughly speaking,
differential privacy guarantees that
any sequence of accesses to stored data caused by
the execution of a sequence of record retrievals will be
just as likely, except for a multiplicative factor,
caused by another sequence attained by replacing
a single retrieval for a database record with a single retrieval for any other
database record. Differentially private access
is a generalization of oblivious access as oblivious access
is achieved by fixing
the differential privacy parameters to be $\epsilon = 0$ and
$\delta = \negl(n)$.

With this definition,
readers might now ask the following two questions.
When does differentially private
access make sense as a security notion? Or, might there exist a stronger
security notion that is achievable with small overhead?

To answer the first question, we revisit the scenario of
privacy-preserving data analysis on outsourced databases.
In these scenarios, there is no sense in using storage schemes
providing obliviousness to hide entire record retrieval sequences when
differentially private disclosures only maintain privacy for single records.
For example, suppose we wish to disclose a differentially private
model trained over a sample from the database.
Obliviousness would unnecessarily hide the identity of the entire retrieved sample at a high
cost yet the differential privacy would guarantee the privacy about individuals in the sample.
Therefore, differentially private access is the privacy notion that is
complementary to differential privacy disclosures on outsourced databases.
In general, differentially private access guarantees privacy for individual retrievals.
For the second question, recent work by Persiano and Yeo~\cite{PY18} show
that $\Omega(\log n)$ overhead is necessary for differentially private RAMs
with parameters of $\epsilon = O(1)$ and $\delta \le 1/3$.
Therefore, the task of finding good parameters for
differential privacy that provide both meaningful
privacy as well as while permitting small overhead seems non-trivial.


We study differentially private variants of both PIR and ORAM.
PIR enables clients to obliviously retrieve
database records outsourced to a server.
In PIR, both the client and the server are stateless
which means that no information may be maintained between multiple
record retrievals beyond the server storing the database.
Since both the client and the server are stateless, PIR requires
the server to perform an operation on each database record:
if a record is not involved in computing the server's reply, then
it cannot be the record retrieved by the client.
The majority of PIR constructions use homomorphic
encryption~\cite{ABF16, ACL17} or other expensive encryptions with
useful properties~\cite{KO97, CMS99, GR05}.
Some recent works consider PIR with stateful clients~\cite{PPY18b}
or super-linear server storage~\cite{CHR17,BIP17,HOW18}.
All single-server PIR schemes consider computational security.
PIR has also been studied in the multiple, non-colluding server setting
where constructions only require servers to perform computation
sublinear in the number of database records and provide
statistical security~\cite{CGK95,BIM00}.

On the other hand, ORAM allows both the client and server to be stateful
and maintain information between multiple queries and allows
the client to perform both record retrievals and overwrites.
Additionally, ORAM allows a setup phase where
the untrusted server receives an encrypted version of the database to store
and the client is given a secret key.
As a result of state, the server is no longer required to perform
computation on every database record.
With this efficiency, the majority of ORAM schemes consider only simple
upload and download operations and forego the use of expensive encryption
schemes of PIR that require the server to perform untrusted computation,
although some previous works consider ORAM with homomorphic
encryption~\cite{DDF16}.
For a database with $n$ records, it has been shown
that ORAM requires overhead of $\Omega(\log n)$ records~\cite{GO96,LN18}
and that $O(\log n \cdot \log\log n)$ communication suffices~\cite{PPR18}.
We also examine an extension to ORAM, which we denote as {\em oblivious key-value storage}
(previously also denoted as oblivious storage), where database records
are uniquely identified by keys from a large universe and clients
might also attempt to retrieve a non-existent key~\cite{GMO12}.

In our work, we study the differentially private variants of these three primitives, which
we denote as {\em differentially private information retrieval} ($\dpir$),
{\em differentially private RAM} ($\dpram$) and
{\em differentially private key value storage} ($\dpkvs$).
We focus on the question the best privacy that can be achieved
by each of these primitives that only require small overhead over plaintext access.

\noindent
{\bf Our Contributions.}
We present both negative and positive results for
both $\dpir$ and $\dpram$ as well as other natural
variants of these primitives.

Our lower bounds for
$\dpir$ and $\dpram$ apply for a wide range of privacy budgets
when the database is stored
in a natural encoding.
In particular, we consider the balls and bins model of
storage (previously considered in~\cite{GO96,BN16,CCM18}) where
each database record is considered as an opaque ball along with a key containing
important metadata.
While not covering all database encodings, in our opinion, the
balls and bins models encompasses all natural database representations
that maximize practical utility and efficiency.
In particular, the contents of each record are assumed to be placed together,
which is typically done to maximize data locality.

For positive results, we show that a class of simple
constructions are optimal for large sets
of privacy budgets.
These schemes may be viewed as inserting noise into the sequence of
record retrievals and/or overwrites.
In particular, the real record retrievals and/or overwrites
are disguised within a set that also contain fake record retrievals and/or
overwrites.
These schemes suffice to construct the best privacy-preserving
storage protocols with very small overhead for $\dpir$, $\dpram$
and $\dpkvs$.
While our constructions are simple, we draw attention to the fact that
designing differentially private storage systems is delicate even with
weak security notions.
Some simple constructions are very appealing and, at first, seem to match our
lower bounds. However, many variants of simple constructions (including our constructions)
end up being completely insecure.
As an example, in Section~\ref{sec:insecure}, we consider a simple and tempting $\dpir$ construction and show that it only guarantees differential privacy with $\delta\rightarrow 1$,
i.e., no privacy at all.
Additionally, while our constructions are simple, the security
proofs end up being quite involved (especially for $\dpram$).
Furthermore, to handle the additional functionalities of $\dpkvs$,
a more complex algorithm using a novel, oblivious adaptation of
the two-choice hashing scheme~\cite{Mitzenmacher01} is required.

\paragraph{DP-IR Results.}
To our knowledge, previous
works on $\dpir$ consider only the multiple, non-colluding server
scenarios~\cite{TDG16}. Our work is the first to consider $\dpir$ in the single
server scenario.

We show that for any $\edpir$ that must always output the right answer
and any value of $\epsilon \ge 0$,
then the server must operate on all stored $n$
records.
This result is very strong and somewhat surprising as there
does not exist any weakening of privacy that will improve
the server computational costs.

\begin{theorem}[informal]
For any $\epsilon, \delta \ge 0$, any $\eddpir$ scheme in the balls and bins
model must operate on $(1-\delta)n$ records.
\end{theorem}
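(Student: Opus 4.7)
The plan is to combine the correctness requirement with the $(\epsilon,\delta)$-DP guarantee in the balls and bins model, where each record is an opaque ball so the server's view of an un-touched ball is information-theoretically independent of its contents.

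First, I would formalize correctness. For any query $q_i$ requesting record $i$, let $S(q_i)$ denote the random set of records the server operates on while processing $q_i$, and let $E_i$ be the event $i \in S(q_i)$. Since the server's reply must always equal record $i$'s value and the record is opaque, the reply distribution is independent of record $i$'s contents unless the algorithm operates on record $i$; otherwise one could swap record $i$'s contents without affecting the output and break correctness. Hence $\Pr[E_i \mid q_i] = 1$.

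Next, I would invoke $(\epsilon,\delta)$-DP on the \emph{complementary} event $\overline{E_i}$. For any $j \neq i$, the single-element queries $q_i$ and $q_j$ differ in a single retrieval and are therefore adjacent under the $\eddpir$ definition, so
\[
\Pr[\overline{E_i} \mid q_j] \;\le\; e^\epsilon \Pr[\overline{E_i} \mid q_i] + \delta \;=\; \delta,
\]
using $\Pr[\overline{E_i} \mid q_i] = 0$. The crux is that by applying DP toward the zero-probability side, the $e^\epsilon$ factor multiplies zero and the $\epsilon$-dependence vanishes entirely, which is what allows the bound to hold uniformly in $\epsilon$. Rearranging, $\Pr[E_i \mid q_j] \ge 1-\delta$. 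Summing over all records and using linearity of expectation,
\[
\E[\,|S(q_j)|\,] \;=\; \sum_{i=1}^{n} \Pr[E_i \mid q_j] \;\ge\; 1 + (n-1)(1-\delta) \;\ge\; (1-\delta)n.
\]

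The step I expect to be the main obstacle is cleanly justifying the opacity argument---i.e., that PIR correctness in the balls and bins model forces the queried record to actually be touched, rather than being reconstructed from other records or from ciphertexts that encode multiple balls. The intended formalization is an indistinguishability argument: if, with positive probability, the server did not operate on record $i$ yet still returned its value correctly, one could replace record $i$'s ball by an independently sampled ball without altering the server's view on that execution, contradicting the always-correct requirement. Once this opacity lemma is in place, the DP inequality applied to $\overline{E_i}$ and linearity of expectation immediately deliver the $(1-\delta)n$ bound claimed in the theorem.
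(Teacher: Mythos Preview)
Your proposal is correct and follows essentially the same argument as the paper: apply the $(\epsilon,\delta)$-DP inequality to the complementary event ``record $i$ is not touched'' so that the $e^\epsilon$ factor multiplies zero, deduce $\Pr[i\in S(q_j)]\ge 1-\delta$ for every $i$, and sum via linearity of expectation. The only difference is cosmetic---the paper packages the DP step as a separate lemma and takes the opacity fact $\Pr[i\in S(q_i)]=1$ as immediate from the balls-and-bins model rather than spelling out the swapping argument you sketch.
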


On the other hand, we show that this strong negative result may be circumvented
by considering $\dpir$ schemes
with non-zero error probabilities $0 < \alpha \le 1$.
Here, the error probability is over the internal randomness of the $\dpir$
scheme and does not depend on the queries and/or stored data.
For this case, we present the following weaker lower bound.
\begin{theorem}[informal]
For any $\epsilon, \delta \ge 0$, any $\eddpir$ scheme in the balls and bins model
with error probability $\alpha > 0$ must operate on
$$
\Omega\left( \frac{(1-\alpha - \delta) \cdot n}{e^\epsilon} \right)
$$
records.
\end{theorem}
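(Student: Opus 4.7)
The plan is to combine a differential-privacy transfer argument with a correctness-based lower bound on the probability of touching the queried record, in the spirit of the proof of the $\delta = 0$ case given earlier in the paper. Because we now allow error probability $\alpha > 0$, the correctness step only gives a lower bound of $1 - \alpha$ (rather than $1$) on the probability that the queried record is touched, which is exactly what produces the $(1-\alpha-\delta)/e^\epsilon$ factor in the conclusion.

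First, for each pair of indices $i, j \in [n]$, let $p_{i,j}$ denote the probability, over the randomness of the $\dpir$ scheme, that record $j$ is \emph{operated on} (i.e.\ read by the server) when the client issues the single query for record $i$. In the balls-and-bins model, if record $j$ is never touched during the server's computation, then the transcript returned to the client is a function only of the other records and of the scheme's randomness, hence is independent of the ball stored at $j$. Because the ball at $j$ is drawn from a sufficiently large alphabet, any client-side decoder that outputs the correct ball with probability at least $1 - \alpha$ on query $j$ must therefore trigger an access to $j$ with probability at least $1 - \alpha$, giving $p_{j,j} \ge 1 - \alpha$. I would formalize this as the first lemma.

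Next, apply $(\epsilon,\delta)$-differential privacy with the event $E_j = \{\text{record } j \text{ is accessed}\}$, viewed as a measurable event over the server's access pattern, for the neighboring inputs ``query $j$'' and ``query $i$.'' This yields
\[
p_{j,j} \;\le\; e^\epsilon \cdot p_{i,j} \;+\; \delta,
\]
and rearranging together with the correctness lower bound gives
\[
p_{i,j} \;\ge\; \frac{p_{j,j} - \delta}{e^\epsilon} \;\ge\; \frac{1 - \alpha - \delta}{e^\epsilon}
\]
for every $i, j \in [n]$. Summing over $j$ and using linearity of expectation, the expected number of records touched while serving any single query $i$ is at least $n(1-\alpha-\delta)/e^\epsilon$, which is the claimed bound.

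The main obstacle is the balls-and-bins correctness step: one must argue rigorously that ``not touching $j$'' implies the returned transcript is independent of $j$'s content, and then that this independence forces the decoder to err with probability tending to $1$ on record $j$. This requires pinning down what the balls-and-bins model allows the server to compute (opaque handling of balls, no access means no information leaves ball $j$) and using the fact that if the ball universe has size $\omega(1/\alpha)$, an output independent of ball $j$ cannot match ball $j$'s content with probability more than $\alpha$. Once this step is in place, the DP transfer and the summation over $j$ are essentially routine and mirror the $\alpha = 0$ argument presented earlier in the paper.
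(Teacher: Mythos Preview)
Your proposal is correct and follows essentially the same argument as the paper: use correctness to get $\Pr[B_j\in\ir(j)]\ge 1-\alpha$, apply the $(\epsilon,\delta)$-DP inequality to the event ``record $j$ is accessed'' to transfer this to $\Pr[B_j\in\ir(i)]\ge(1-\alpha-\delta)/e^\epsilon$, and sum over $j$. The only difference is that you treat the correctness step as the main obstacle and sketch an information-theoretic argument about ball-universe size, whereas in the paper's balls-and-bins model the balls are \emph{opaque} and the client can only output a ball it has downloaded, so $\Pr[B_j\in\ir(j)]\ge 1-\alpha$ is immediate without any alphabet-size consideration.
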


As we are focusing on schemes with very small overhead, the above
theorem leads credulence that there might exist a $\edpir$ scheme
with some small, constant error probability $\alpha > 0$
that only performs $O(1)$ operations when $\epsilon = \Theta(\log n)$.
We show that there exists a simple construction with these properties.

\begin{theorem}[informal]
There exists an $\edpir$ with $\epsilon = \Theta(\log n)$ and
constant error probability $\alpha > 0$ that only operates on $O(1)$ records.
\end{theorem}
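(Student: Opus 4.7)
The plan is to exhibit an essentially trivial ``read-one-record'' construction and verify its parameters directly. For each query to index $i$ the client tosses a biased coin with probability $p = 1-\alpha$ of heads: on heads it instructs the server to read exactly record $i$, and on tails it instructs the server to read a uniformly random index from $\{1,\dots,n\}$. The client returns the retrieved value if the accessed index coincides with $i$, and $\bot$ otherwise. The server touches a single record per query, giving overhead $O(1)$.

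The error-probability analysis is immediate: an error can only happen when the tails branch fires \emph{and} the random draw misses $i$, so the overall error probability is $(1-p)(1 - 1/n) < 1-p = \alpha$, which is a positive constant by choice of $p$.

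For the $\edpir$ analysis I would argue in two steps. First, because every query uses fresh, independent randomness, the per-query access-pattern distributions are mutually independent, and two adjacent query sequences agree on all but one coordinate; the joint-distribution ratio therefore reduces to the single-query ratio at the position where the queries differ. Second, I would bound that single-query ratio by direct computation. The probability that the server reads index $j$ on a query for $i$ equals $p + (1-p)/n$ when $j=i$ and $(1-p)/n$ otherwise, so for two adjacent queries $i\neq i'$ the worst-case ratio occurs at $j=i$ and equals
$$
\frac{p + (1-p)/n}{(1-p)/n} \;=\; 1 + \frac{p\,n}{1-p},
$$
whose logarithm is $\Theta(\log n)$ for any positive constant $p$ bounded away from $1$; the reciprocal direction and the case $j\notin\{i,i'\}$ give ratio at most $1$. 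Setting $\epsilon = \log\!\bigl(1 + pn/(1-p)\bigr)$ then yields $\epsilon = \Theta(\log n)$ with constant error probability $\alpha = 1-p$ and $O(1)$ server work per query.

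The main obstacle I anticipate is conceptual rather than arithmetic: the introduction flags that seemingly similar ``mix-in-noise'' ideas can silently collapse to $\delta\to 1$ (Section~\ref{sec:insecure}). Justifying that \emph{this} construction avoids that trap rests on two features I would emphasize: the accessed-index distribution has full support on $\{1,\dots,n\}$, and neither the coin outcome nor the random draw depends on the stored data or on previous queries. Together these ensure that conditioning on any prefix of the transcript cannot tighten the bound above, so the single-query bound lifts cleanly to the whole sequence by independence, completing the proof.
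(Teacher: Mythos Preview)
Your proposal is correct and is essentially the paper's own construction specialized to download-set size $K=1$: the paper's scheme (Section~\ref{sec:ir} and Appendix~\ref{sec:dpir_code}) tosses the same $\alpha$-biased coin, on heads includes the true index in a set of $K=\lceil(1-\alpha)n/(e^\epsilon-1)\rceil$ random indices, and on tails draws all $K$ indices at random; your ``read one record'' scheme is exactly this with $K=1$, and your ratio $1+pn/(1-p)$ is the $K=1$ instance of the paper's bound $(1-\alpha)n/(\alpha K)+1$. The only difference is scope: the paper states the construction for general $\epsilon$ (yielding $O(n/e^\epsilon)$ records and matching the lower bound across the whole range), whereas you go directly to the $\epsilon=\Theta(\log n)$ endpoint needed for the informal theorem.
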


Due to our lower bounds, the above construction seems to be the best
privacy that can be achieved by any $\dpir$ with only constant overhead
compared to plaintext storage access.

We also consider natural extensions of our $\dpir$ with
multiple, non-colluding servers.
There exist several
multiple-server PIR schemes~\cite{CGK95,BIM00} in literature.
In our full paper, we present asymptotically tight lower bounds for the constructions
in~\cite{TDG16}.

\smallskip
\paragraph{DP-RAM Results.}
For $\dpram$, we once again start with describing our lower bound results.
Unlike $\dpir$, there is no separation between the best lower bound
for perfectly correct $\dpram$ and $\dpram$ with error probability $\alpha > 0$.
We now present our $\dpram$ lower bound which applies for all values
$0 \le \alpha \le 1$.

\begin{theorem}[informal]
For any $\epsilon \ge 0$, any $\edpram$ with error probability $\alpha \ge 0$
in the balls and bins model and a client that stores at most $c$ blocks must operate on
$$
\Omega\left(\log_c \left(\frac{(1-\alpha) \cdot n}{e^\epsilon}\right)\right)
$$
records.
\end{theorem}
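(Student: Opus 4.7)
The plan is to adapt the standard decision-tree counting technique for ORAM lower bounds in the balls-and-bins model~\cite{GO96,LN18} to the differentially private setting, incorporating the privacy budget $\epsilon$ and the error probability $\alpha$ as a combined slack. As the adversarial workload I would use a write-then-read sequence: the client writes $n$ distinct opaque balls $b_1, \ldots, b_n$ into distinct addresses, then issues a single read to a uniformly random address $a_J$. Correctness requires the output to equal $b_J$ with probability at least $1-\alpha$.

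The combinatorial core is to model the read execution as a decision tree. At each internal node the client holds at most $c$ balls in local memory, so the next server access branches into at most $c+1$ successor states (one for each in-memory ball the cell could match, plus one for a fresh ball). Over $t$ accesses the tree has at most $(c+1)^t$ leaves, each of which commits the client to outputting a single specific ball it has seen. For any fixed randomness $r_W$ of the write phase, the leaves outputting different $b_j$'s are disjoint, so at most $(c+1)^t$ distinct targets can be correctly identified by a single tree.

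To extract the $e^\epsilon$ factor, I would invoke the $\epsilon$-DP guarantee for neighboring query sequences that differ only in the read address. The induced distributions over access patterns, and hence over decision-tree leaves, must be pointwise $e^\epsilon$-close, which effectively shrinks the number of \emph{distinguishable} targets that can be covered by a fixed tree by a factor of $e^\epsilon$. Averaging over $r_W$, combining with $(1-\alpha)$-correctness, and propagating the $e^\epsilon$ slack yields $(c+1)^t \ge (1-\alpha)\,n / e^\epsilon$, whence $t = \Omega(\log_c((1-\alpha)\,n / e^\epsilon))$ as desired.

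The main obstacle is formalizing how the DP guarantee interacts with the counting bound: the ``correct output'' leaf sets $\mathcal{L}_j(r_W)$ depend on the write randomness while DP only constrains the marginal distribution of access patterns, so a naive application of DP per $r_W$ is not available. A clean route is a reduction that converts any $\edpram$ with $(1-\alpha)$-correctness into a plaintext RAM whose effective correctness is $(1-\alpha)/e^\epsilon$, and then invokes the classical $\Omega(\log_c n)$ ORAM lower bound. A secondary subtlety is ruling out pathological randomized client strategies in the balls-and-bins model that might appear to inflate the effective branching factor beyond $c+1$, which requires a careful analysis showing the branching upper bound is tight in expectation.
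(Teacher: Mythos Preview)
Your decision-tree counting for the combinatorial core is essentially the paper's counting step (the paper phrases it as: a fixed transcript of total bandwidth $kl$ over a length-$l$ sequence admits at most $c^{2kl}$ possible output sequences). The gap is precisely where you flag it: how $\epsilon$ enters. Your suggestion that DP ``shrinks the number of distinguishable targets by a factor of $e^\epsilon$'' is never made precise, and your fallback---reducing an $\edpram$ to a plain RAM with effective correctness $(1-\alpha)/e^\epsilon$ and then quoting the classical bound---is not constructed and does not obviously exist, for exactly the reason you identify: DP constrains only the marginal transcript distribution, not per-write-randomness behavior, so there is no evident simulator realizing such a reduction.

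The paper closes this gap with a Bayesian posterior argument rather than a reduction, and it works over length-$l$ sequences under the uniform prior rather than a single read. Two short lemmas do the work: (a) for pure $\epsilon$-DP, any transcript $\calT$ of positive probability has strictly positive posterior on \emph{every} query sequence under any prior; and (b) for any two sequences $Q,Q'$, the posterior ratio $\Pr[Q\mid\calT]/\Pr[Q'\mid\calT]$ is at least $e^{-\epsilon\,d(Q,Q')}$ times the prior ratio (immediate from Bayes' rule plus DP). Under the uniform prior on $[n]^l$, pigeonhole gives some $Q$ with $\Pr[Q\mid\calT]\ge 1/n^l$, which the counting step rewrites as $\ge (1-\alpha)^l/c^{2kl}$; then (b) with $d(Q,Q')\le l$ forces $\Pr[Q'\mid\calT]\ge e^{-l\epsilon}(1-\alpha)^l/c^{2kl}$ for \emph{every} $Q'$, and summing the $n^l-1$ such terms against the cap of $1$ yields $c^{2kl}\ge e^{-l\epsilon}(n^l-1)(1-\alpha)^l$, hence $k=\Omega\bigl(\log_c((1-\alpha)n/e^\epsilon)\bigr)$. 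The privacy budget thus enters as a multiplicative loss in a posterior lower bound, not as a shrinkage of the output count---this is the mechanism your proposal is missing.
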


The above theorem essentially states that there are two ways that
one can achieve very efficient $\edpram$ constructions:
either increase the amount of client storage ($c$) or increase
the privacy budget ($\epsilon$).
In most scenarios, it is impractical to suppose that the client
can store large portions of data. As our desire is to construct
schemes that should be easily usable in complex systems,
we minimize the requirements of our clients by assuming clients
have very small amounts of storage.
In~\cite{PY18}, an $\Omega(\log (n/c))$ lower bound is given for
constant $\epsilon$ and $\delta \le 1/3$. However, their lower bound
can be generalized to $\Omega(\log (n/c) / e^\epsilon)$ for any
$\epsilon \ge 0$.
Their lower bound has an exponentially worse dependence on $\epsilon$ compared
to our lower bound.
For example, the lower bound in~\cite{PY18} does not preclude the existence
of a $\edpram$ with $\epsilon = \Theta(\log\log n)$ and constant overhead.
On the other hand, our lower bound improves the privacy budget lower bound
showing that an $\edpram$ that operates on $O(1)$ records must have
$\epsilon = \Omega(\log n)$.
We show the existence of a constant overhead $\edpram$ with asymptotically optimal
$\epsilon = \Theta(\log n)$ privacy.

\begin{theorem}[informal]
There exists an $\edpram$ with $\epsilon = \Theta(\log n)$ that only operates
on $O(1)$ records.
\end{theorem}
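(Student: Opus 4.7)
The plan is to exhibit an explicit noise-injection construction in the same spirit as our $\dpir$ construction, augmented with client state so that the write functionality of $\dpram$ is supported, and then verify its $O(1)$ overhead and pure differential privacy with $\epsilon = \Theta(\log n)$.

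\textbf{Construction.} The server stores each of the $n$ records under a semantically secure encryption. The client keeps a secret key for a PRF that induces a pseudorandom permutation $\pi$ of the physical positions, together with a constant-size stash. Logical address $i$ is stored at physical address $\pi(i)$. On a request to logical address $i$, the client samples $k = \Theta(1)$ physical locations uniformly at random and accesses them, additionally accessing $\pi(i)$ if it is not already among the $k$ samples; every touched record is re-encrypted (with the requested update applied when the operation is a write), so reads and writes yield identical transcripts. The permutation is refreshed periodically through an oblivious shuffle whose cost amortizes to $O(1)$ per operation, so that correlations between repeated queries do not accumulate.

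\textbf{Correctness, overhead, and DP.} Correctness is immediate because $\pi(i)$ is always among the accessed positions, and the per-operation cost is $O(1)$. To prove $\edpram$ with $\epsilon = \Theta(\log n)$, I would fix two neighboring request sequences that differ at a single position $j$ with queries $i \ne i'$ and bound the ratio of transcript probabilities. The core step is the differing position: conditioned on the transcript of the earlier steps, $\pi(i)$ and $\pi(i')$ are uniformly and exchangeably distributed over the physical positions not yet pinned down by distinct earlier queries, so the ratio between the two probabilities of any observed access set is at most $\poly(n) = e^{\Theta(\log n)}$. Contributions from later steps are controlled by the pseudorandomness of $\pi$ and the semantic security of the encryption, and the periodic refresh ensures that the influence of a single query on downstream access patterns remains bounded by $O(1)$ subsequent positions in expectation.

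\textbf{Main obstacle.} The hardest part will be making this DP argument carry through across the whole sequence of operations in the face of writes, evolving stash state, and the periodic refresh, none of which are oblivious on their own. I expect to first reduce to an information-theoretic analysis by invoking semantic security to strip out payload contents, then use the pseudorandomness of the PRF to replace $\pi$ with a truly random permutation, and finally bound the influence of the differing query on the downstream access pattern so that the overall loss stays at $\Theta(\log n)$ rather than growing with the number of operations. This last step is delicate because, as the excerpt emphasizes, many tempting simple variants of this construction turn out to be completely insecure, so particular care is needed in tuning the noise distribution and the stash eviction policy to match our $\dpram$ lower bound.
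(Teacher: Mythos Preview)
Your construction does not achieve pure $\epsilon$-DP for any finite $\epsilon$, for essentially the same reason the paper's strawman in Section~\ref{sec:insecure} fails. Because $\pi(i)$ is \emph{always} included in the access set, repeated queries to the same logical address produce access sets that must intersect. Concretely, take neighboring sequences $Q=(i,i)$ and $Q'=(i',i)$ with $i\ne i'$, both lying between two consecutive shuffles (unavoidable: an oblivious shuffle of $n$ items costs $\Omega(n)$, so to amortize to $O(1)$ it can occur at most once every $\Omega(n)$ queries). Under $Q$, both access sets contain the fixed location $\pi(i)$, so $S_1\cap S_2\ne\emptyset$ with probability $1$. Under $Q'$, the sets are $\{\pi(i')\}\cup R_1$ and $\{\pi(i)\}\cup R_2$ with $|R_1|=|R_2|=k=O(1)$, and these are disjoint with probability $1-O(1/n)$. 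Hence the event $S_1\cap S_2=\emptyset$ has probability $0$ under $Q$ and positive probability under $Q'$, so no finite $\epsilon$ works. Your ``bounded in expectation'' claim on downstream influence is the wrong quantifier: pure DP needs \emph{every} transcript to be reachable under both neighbors with comparable probability.

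The paper's construction is structurally different precisely to avoid this zero-probability trap: it does \emph{not} guarantee that the real location is touched. Instead the client maintains a stash in which each block sits independently with probability $p$ (with $p\cdot n=\omega(\log n)$); when the requested block is already in the stash, the downloaded location is \emph{uniformly random} over $[n]$, and likewise in the overwrite phase. This gives every pair $(d_j,o_j)$ probability at least $p/n^2$ under any query, so ratios are always finite. The privacy proof then shows that the $j$-th step depends only on $q_j$ and on the overwrite at $\prev(Q,j)$, and that for $Q,Q'$ differing at one position $k$, every factor except those at $\{k,\nxt(Q,k),\nxt(Q',k)\}$ equals $1$, with each of those three bounded by $(n/p)^{O(1)}=n^{O(1)}$. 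Your permutation-plus-noise scheme has no analogue of this ``uniform fallback,'' and no amount of re-encryption or amortized reshuffling supplies it; that is the missing idea.
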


Once again, this construction seems to be the best privacy
that can be achieved by any $\dpram$ scheme with only $O(1)$ overhead
over the baseline, unprotected storage access due to our lower bounds.
Our scheme improves on prevous $\dpram$ schemes in~\cite{WCM16}
which starts from Path ORAM~\cite{SVS13} and degrades security to improve efficiency.
For their scheme to achieve even client storage of $O(\sqrt{n})$,
their construction recursively stores position maps
which costs both logarithmic overhead and client-to-server roundtrips.
On the other hand, our $\dpram$ construction uses both $O(1)$ overhead and roundtrips
while achieving small client storage.


\paragraph{DP-KVS Results.}
Finally, we consider $\dpkvs$ which is an extension of $\dpram$. As a result,
all $\dpram$ lower bounds also apply to $\dpkvs$. Therefore, the best
construction that is achievable by $\dpkvs$ with $O(1)$ overhead
could be $\epsilon = \Theta(\log n)$.
Due to the difficulties of handling a larger universe of queries
and possibility that clients request on-existent keys, we present
a construction with slightly worse than constant overhead. In addition,
we can only achieve privacy in the approximate differential privacy
framework.

\begin{theorem}[informal]
There exists a $\edpkvs$ with $\epsilon = \Theta(\log n)$
that operates on $O(\log\log n)$ records.
\end{theorem}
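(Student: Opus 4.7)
The plan is to adapt the simple noise-based $\edpram$ construction to the key-value setting by replacing the array layout with an oblivious two-choice hashing scheme, whose maximum load $O(\log\log n)$ will dictate the final overhead. Concretely, I would allocate $\Theta(n/\log\log n)$ buckets of capacity $\Theta(\log\log n)$, pick two hash functions $h_0,h_1$ from a suitable family, and store each key-value pair $(k,v)$ in one of the buckets $h_0(k)$ or $h_1(k)$ following the two-choice placement rule. To serve any $\opRead$ or $\opWrite$ for key $k$, the client always downloads both $h_0(k)$ and $h_1(k)$ in full; this already hides which of the two locations actually holds the pair and simultaneously handles the non-existent-key case (both buckets are fetched whether or not $k$ is present in the store).

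To lift this oblivious core to a $\eddpkvs$ guarantee, I would layer the same noise mechanism used for the $\edpram$ construction on top of it: in addition to the two real buckets, the client performs a number of dummy bucket accesses sampled from a truncated $\DG$ distribution calibrated so that any two adjacent query sequences induce output distributions within a factor of $e^\epsilon$, except on a tail that is absorbed by $\delta$. Choosing $\epsilon=\Theta(\log n)$ makes $\DG$ concentrate sharply enough around $0$ that it can be truncated at a constant, so each operation incurs $O(1)$ bucket accesses, each of size $O(\log\log n)$, yielding the claimed $O(\log\log n)$ overhead.

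The step I expect to be the main obstacle is making the two-choice hashing \emph{obliviously writable}: a standard two-choice insertion leaks, through its placement, which bucket a new item occupied, and this leakage is correlated with the inserted key. I would design the insertion and rebalancing so that on every $\opWrite$ the client fetches both candidate buckets, updates the pair locally under a fixed tie-breaking rule that depends only on the two current bucket loads, re-encrypts both buckets, and writes both back in full; the server-visible transcript is then a deterministic function of $(h_0(k),h_1(k))$ together with independent dummy noise. Periodic full rebuilds every $\Theta(n)$ operations under fresh hash functions are needed to keep the maximum load at $O(\log\log n)$ with overwhelming probability, and their cost amortizes to only $O(1)$ extra bucket accesses per query.

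Finally, the $(\epsilon,\delta)$-DP analysis proceeds by conditioning on the high-probability event that the maximum bucket load remains $O(\log\log n)$ throughout an epoch and that no adversarially relevant hash collision between the differing pair of adjacent queries exposes a bucket asymmetry; both failure events occur only with $\negl(n)$ probability over the choice of hash family, which is folded into $\delta$. Conditioned on these events, the DP bound reduces to the same calibrated $\DG$ argument used for the $\edpram$ theorem, but now applied at the granularity of bucket indices rather than record indices, so the same $\epsilon=\Theta(\log n)$ parameter still suffices to close the proof.
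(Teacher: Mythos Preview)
Your high-level decomposition---reduce $\dpkvs$ to two-choice hashing plus a $\dpram$-style access layer---matches the paper's. But there is a real gap in your noise mechanism. You say the client always fetches both real buckets $h_0(k),h_1(k)$ and then adds a random number of dummy bucket accesses drawn from a truncated $\DG$ distribution. This is precisely the trap the paper isolates in Section~\ref{sec:insecure}: whenever the real locations are touched with probability $1$ while every other location is hit only with some small probability $q$, the event ``bucket $b$ was not accessed'' has probability $0$ under the query that hashes to $b$ and probability $\approx 1-q$ under an adjacent query that does not. With $O(1)$ dummies spread over $\Theta(n/\log\log n)$ buckets you get $q=O((\log\log n)/n)$, and the DP inequality forces $\delta \ge 1-q \to 1$. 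Calibrating the \emph{count} of dummies via a discrete-Laplace-type law does nothing for this; the problem is that the real buckets sit in the support with certainty. This is also not what the paper's $\edpram$ does, so ``the same noise mechanism'' is a misdescription.

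The paper sidesteps the issue by running its stash-based $\edpram$ of Section~\ref{sec:ram} \emph{at the bucket granularity}: each of the two bucket accesses is itself served through the $\dpram$, so with probability $p$ the requested bucket is already in the client stash and a uniformly random bucket is fetched in its place. Thus the real bucket is touched only with probability $1-p<1$, which makes the likelihood ratio finite and yields $\epsilon=O(\log n)$ per access; two accesses compose to the same order. Separately, the paper's oblivious two-choice layout is not $\Theta(n/\log\log n)$ disjoint padded buckets; it keeps $n$ buckets but arranges them as leaf-to-root paths in $\Theta(n/\log n)$ binary trees of depth $\Theta(\log\log n)$, with a client-side super-root absorbing overflow, and proves $\negl(n)$ failure for this structure. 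Your disjoint-bucket variant is a plausible alternative for the storage bound, but you would still need to (i) verify a $\negl(n)$ (not merely inverse-polynomial) tail for heavily-loaded two-choice, and (ii) replace the additive-dummy layer with a mechanism that sometimes skips the real bucket, e.g.\ by actually invoking the paper's stash-based $\dpram$ on bucket indices.
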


While the scheme has non-constant overhead, the $O(\log\log n)$ overhead
is exponentially smaller than the best oblivious key-value storage
schemes based on ORAMs.

To construct our $\dpkvs$ scheme, we present
an improved, oblivious variant of the power of two choices
hashing scheme~\cite{Mitzenmacher01} that may be of separate,
independent interest.
Traditional power of two choices hashing guarantees that bins do not exceed
$O(\log\log n)$ items except with probability negligible in $n$.
Our desired oblivious variant must hide the sizes of bins.
One way to hide bin sizes is to pad all bins with dummy items up to a maximum.
This technique requires padding all $n$ bins to $O(\log\log n)$ items meaning
$O(n\log\log n)$ storage.
By using a tree-like structure to allow bins to share storage, we present
a new scheme using $O(n)$ storage.

\noindent{\bf Related Work.}
$\dpram$ was considered previously in~\cite{WCM16} which present
a construction based on Path ORAM~\cite{SVS13}. However, their scheme
requires recursively stored position maps which requires $\Theta(\log n)$
client-to-server roundtrips to get client storage of even $O(\sqrt{n})$.
We present $\dpram$ schemes that only require $O(1)$ overhead with
small client storage.
A construction of $\dpir$ in the multiple, non-colluding server scenario
was considered in~\cite{TDG16}, which we show is optimal for
certain parameters.
Lower bounds for $\dpram$ have been considered in~\cite{PY18},
which are stronger and weaker in different dimensions. Their lower bound
has an exponentially worse dependence on the privacy budget, $\epsilon$.
On the other hand, their lower bound applies to general storage encodings
whereas our work only apply to the balls and bins model.
A variant of $\dpram$ that only maintains privacy
for database record insertions is considered in~\cite{CCM18}.
A privacy notion stronger than obliviousness
is considered in~\cite{KKN17} where accesses are
protected using obliviousness while the number of accesses is protected
using differential privacy.
Obliviousness has been considered for other problems such as
sorting~\cite{Batcher68,AKS83} and shuffling~\cite{Waksman68,OGT14,PPY18c}.
The problem of relaxing the security notion of obliviousness to differential
privacy has also been studied in the context of multi-party protocols.
A simple multi-party protocol for single-bit inputs is shown
in~\cite{NIPS2015_6004} that maximizes the accuracy for any privacy budget.
The multi-party computation problem for larger input sizes is studied
in~\cite{MG17}. Additionally, the problem of differentially private disclosure
of specific analysis such as subspace clustering~\cite{NIPS2015_5991},
deep learning~\cite{ACG16} and many others have also been studied
where the database is assumed to be stored on a trusted server.

\section{Definitions}
\label{sec:defs}
We suppose the database $D$ contains $n$ records denoted $B_1,\ldots,B_n$.
We will interchangeably use the terms records and blocks. We will refer to a
query as a single operation involving either a record retrieval or overwrite
and a query sequence as a list consisting of record retrievals and/or
overwrites.
We use $[n]$ to refer to the set $\{1, \ldots, n\}$.
We refer to $\calQ$ as the space of all possible queries and let
$Q_1,Q_2 \in \calQ^{l}$ be two query sequences of length $l$.
We define the Hamming distance between $Q_1$ and $Q_2$, which we denote
by $d(Q_1,Q_2)$, as the number of queries where $Q_1$ and $Q_2$ differ.

\subsection{Storage Primitives}
The {\em information retrieval} ($\ir$) primitive stores a
database $D$ of $n$ equal sized records where only record
retrievals are allowed.
The initialization of $\ir$ consists of the server receiving
$D = (B_1,\ldots,B_n)$ and simply processing and storing $D$.
The server is only allowed to store
the database and may not keep any other information between multiple queries.
Similarly, the client is stateless and may not use any storage between
multiple queries. A query to $\ir$ is described using an integer
$q \in [n]$,
which is interpreted as retrieving record $B_q$.
We denote two query sequences $Q_1, Q_2 \in [n]^{l}$ as adjacent
if their Hamming distance is exactly 1.
That is, $Q_1$ and $Q_2$ retrieve a different record
at exactly one query.

The {\em random access memory} ($\ram$) primitive will store a database
$D$ of $n$ equal sized records where both record retrievals and overwrites
are permitted. The initialization of $\ram$ consists of a setup phase
consisting of a protocol run between the client and the server.
The client will receive a private key, process $D$ using the
private key and send the processed database to the server to store.
Both the client and server may be stateful and maintain information
between multiple queries. A query to $\ram$ is a pair $q = (i, \tOp)$
where $i \in [n]$ refers to record $B_i$ and $\tOp \in \{\opR, \opW\}$
describes whether the query is a retrieval or overwrite.
Two query sequences $Q_1, Q_2 \in ([n] \times \{\opR, \opW\})^l$ are adjacent
if their Hamming distance is exactly 1. That is, at exactly one query,
$Q_1$ and $Q_2$ operate on a different record and/or perform a different
operation.

Finally, the {\em key-value storage} ($\kvs$) primitive
is an extension to $\ram$. Queries to $\kvs$ consist of a pair
$q = (k, \tOp)$ where $k \in U$ is the key and $U$ is universe of
all keys and $\tOp \in \{\opR, \opW\}$ refers to whether the query
is a retrieval or overwrite.
Unlike $\ram$, the universe of keys is very large and might be
exponentially larger than the number of operations that will
be performed.
Furthermore, a retrieval operation $q = (k, \opR)$ may request
a key $k$ that has never been previoulsy inserted into the storage
protocol. In this case, the $\kvs$ protocol should output $\perp$.
Identical to $\ram$, two query sequences $Q_1,Q_2 \in (U \times \{\opR, \opW\})^l$ are adjacent if there exists exactly one query that operates
on a different key and/or performs a different operation.

We note that $\ir$, $\ram$ and $\kvs$ are the most studied storage
primitives to provide oblivious access.
There are many other extensions to these primitives
that have also been
studied~\cite{GMO12, MBN15, BCR16, CLT16, CGL17, CHR17, BIP17, HOW18}.

\subsection{Differentially Private Access to Data}
Our privacy notion for a storage primitive $\calS$ storing a database
$D = (B_1,\ldots,B_n)$ with a query space
$\calQ$ will consider the random
variable of the {\em view} of the adversarial server for a query sequence
$Q\in\calQ^l$ of length $l$. With a slight abuse of notation,
we refer to the transcript, $\calS(Q)$,
as the random variable of the adversary's {view} on query sequence $Q$.
The transcript contains all movement of records performed by the server,
as well as all, possibly encrypted, records that have been uploaded and downloaded 
and the initial database.

\begin{definition}[Differentially Private Access]
Let $\calS$ be a storage primitive with query space $\calQ$.
$\calS$ provides $(\epsilon,\delta)$-{\em differentially private access}
if for all pairs of query sequences $Q_1,Q_2 \in \calQ^l$ of length $l$ that are adjacent,
that is $d(Q_1,Q_2)=1$,
and for any subset $S$ of the set of possible views of the adversary, then
the following holds:
\begin{align*}
\Pr[\calS(Q_1) \in S] \le e^\epsilon \cdot \Pr[\calS(Q_2) \in S] + \delta.
\end{align*}
\end{definition}

The $\epsilon$ parameter is referred to as the {\em privacy budget}.
When $\delta = 0$, the above definition is referred to as
{\em pure differential privacy} and the $\delta$ is typically
dropped from notation.
We denote $\ir$, $\ram$ and $\kvs$ primitives that provide
pure differentially private
access with privacy budget $\epsilon$ as $\edpir$, $\edpram$ and $\edpkvs$.
When $\delta > 0$, the definition is referred to as
{\em approximate differential privacy}, which is a
weakening of pure differential privacy.
We denote $\ir$, $\ram$ and $\kvs$ primitives that provide differentially
private access with parameters $\epsilon$ and $\delta$ as
$\eddpir$, $\eddpram$ and $\eddpkvs$.

\section{Lower Bounds}
\label{sec:low}
In this section, we present negative results about $\dpir$ and $\dpram$
in the single-server setting. As $\dpkvs$ is an extension of $\dpram$
with more functionality requirements, all $\dpram$ lower bounds apply
directly to $\dpkvs$.

\subsection{Balls and Bins Model}
Our lower bounds are presented in the balls and bins model of data manipulation.
The $n$ database records are treated as immutable, opaque balls.
Each ball is associated with a mutable key containing metadata about the record.
The formal definition involves a $\server_m$ with storage for $m$ balls
and a $\client_c$ with storage for $c$ balls.

\begin{definition}[Balls and Bins Model]
A $\client_c$ and $\server_m$ operate in the balls and bins model
if all client memory is initially empty and client-server
interactions are restricted to the following:
\begin{enumerate}
\item
Download ball from $\server_m$ to $\client_c$. For some
$i \in [m]$ and $j \in [c]$, store the block at address $i$ at $\server_m$
in address $j$ at $\client_c$.
\item
Upload ball from $\client_c$ to $\server_m$. For some $i \in [m]$ and
$j \in [c]$, store the block at address $j$ at $\client_c$ in address
$i$ at $\server_m$.
\end{enumerate}
\end{definition}

The above definition assumes a passive server acting only as storage.
As a result, only lower bounds on communication can be proven directly.
However, for algorithms with general computation by the server,
lower bounds on communication in the balls and bins model may be modified
to provide lower bounds on server computation. The modification simply views
the transcript as the balls that must be operated on by the server.

\paragraph{Discussion about storage model.}
The balls and bins model does not include all possible encodings of databases
that may be stored. In particular, we assume that database records are all
stored together and the contents do not emit special properties.
In general, related information should be stored in nearby memory locations
so that all required data can be found with the minimal number of cache misses.
Therefore, assuming database records are stored together captures real
world scenarios.
In addition, opaque balls rule out bothersome corner cases where
non-trivial lower bounds cannot hold such as when
records are dependent and record contents may be generated using superficial methods.

\subsection{DP-IR}
\label{sec:low_ir}

We prove our lower bounds directly for approximately differential privacy
(in this case, $\delta \ge 0$). However, one can interpret the results for
pure differential privacy by setting $\delta = 0$.

To prove our lower bounds, we first define the notion of {\em transcripts}.
For a single-server
information retrieval protocol $\ir$ in the balls and bins model,
the transcript $\ir(i)$ is
the random variable denoting the set of blocks requested when retrieving $B_i$.
The query algorithm of an IR algorithm in the balls and bins model only issues
download commands to the server.

The main observation to prove the above theorem is that
the initialization phase of $\dpir$ is public which implies that the adversary
has knowledge of the identity of each record. By the restriction on
correctness, the queried record must always be retrieved.
Using the privacy requirement, it turns out all other
blocks must also be retrieved.
We summarize this in the following main lemma.
\begin{lemma}
\label{lemma:pir1}
Let $\ir$ be an ${\eddpir}$ scheme in the balls and bins model.
For any $a, b, c \in [n]$,
\begin{enumerate}
\item
$\Pr[B_c \in \ir(a)] \le e^\epsilon \Pr[B_c \in \ir(b)] + \delta.$
\item
$\Pr[B_c \notin \ir(a)] \le e^\epsilon \Pr[B_c \notin \ir(b)] + \delta.$
\end{enumerate}
\end{lemma}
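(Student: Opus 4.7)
The plan is to apply the definition of $(\epsilon,\delta)$-differentially private access directly, using the fact that in the balls and bins model the transcript $\ir(i)$ is itself a subset of the $n$ named blocks, so membership of a particular block $B_c$ in the transcript is a well-defined event in the adversary's view. The two length-$1$ query sequences $Q_1=(a)$ and $Q_2=(b)$ are adjacent whenever $a\neq b$ (Hamming distance exactly $1$), so the differential privacy guarantee applies to them; the case $a=b$ makes both inequalities trivially true.

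For part (1), I would take the set of views
\[
S_1 \;=\; \{\, T \;:\; B_c \in T \,\}\subseteq \range(\ir),
\]
i.e., the set of all possible transcripts that contain $B_c$. Since $\ir$ is $(\epsilon,\delta)$-DP on adjacent query sequences and $Q_1,Q_2$ are adjacent,
\[
\Pr[B_c\in\ir(a)] \;=\; \Pr[\ir(Q_1)\in S_1] \;\le\; e^\epsilon\Pr[\ir(Q_2)\in S_1]+\delta \;=\; e^\epsilon\Pr[B_c\in\ir(b)]+\delta.
\]
For part (2), the identical argument works with the complementary event $S_2 = \{T : B_c\notin T\}$, which is also a valid subset of views. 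Applying the DP definition to $S_2$ on the adjacent pair $Q_1,Q_2$ yields the second inequality.

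The only subtlety worth flagging is why $S_1$ and $S_2$ are legitimate measurable events in the adversary's view. This is where the balls-and-bins model and the public initialization phase of $\ir$ matter: because the server observes setup, it knows the physical address at which each named block $B_c$ is placed, and because the query algorithm may only issue download commands on identifiable balls, the adversary can determine from the transcript exactly which named blocks were downloaded. Hence ``$B_c\in\ir(a)$'' and ``$B_c\notin\ir(a)$'' are genuine events over the view space, and the DP-access inequality applies. I do not expect any real obstacle here; the lemma is essentially a one-line instantiation of the DP definition once the balls-and-bins semantics of the transcript are spelled out.
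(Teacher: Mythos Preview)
Your proposal is correct and follows essentially the same approach as the paper: define the set $S$ of transcripts in which $B_c$ is (respectively, is not) downloaded and apply the $(\epsilon,\delta)$-DP inequality to the adjacent length-one sequences $(a)$ and $(b)$. Your added remarks on the $a=b$ case and on why ``$B_c\in\ir(a)$'' is a well-defined event in the adversary's view are reasonable elaborations but do not change the argument.
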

\begin{proof}
Define $S \subseteq \range(\ir)$ to be all transcripts where
the block $B_c$ is downloaded. So, for any query $q$,
$$
\Pr[B_c \in \ir(q)] = \Pr[\ir(q) \in S].
$$
By $(\epsilon,\delta)$-DP, we know
$$
\Pr[\ir(a) \in S] \le e^\epsilon \Pr[\ir(b) \in S] + \delta
$$
completing the first point. The second point follows identically if we choose
$S \subseteq \range(\ir)$ to be all transcripts where block $B_c$ is not
downloaded.
\end{proof}

Using the above lemma,
we present very strong negative results for $\dpir$ that are errorless.
That is, they always return the correct record.
We show the following result:
\begin{theorem}
\label{thm:low_ir_no_error}
If $\ir$ is an $\eddpir$ scheme in the balls and bins model
then $\ir$ performs at least $(1-\delta)n$ operations in expectation.
\end{theorem}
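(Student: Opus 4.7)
The plan is to leverage the errorless correctness requirement together with point~2 of Lemma~\ref{lemma:pir1}, choosing the ``anchor'' query cleverly so that the right-hand side of the differential privacy inequality collapses.

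First, I would observe what errorless correctness buys us in the balls and bins model. Because balls are opaque, the only way for the client to obtain the contents of $B_c$ is to actually download the ball $B_c$ at some point during the execution of query $c$. Hence, for every $c \in [n]$, we have $\Pr[B_c \in \ir(c)] = 1$, or equivalently $\Pr[B_c \notin \ir(c)] = 0$.

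Next I would apply the second inequality of Lemma~\ref{lemma:pir1} with $b = c$, so that the probability on the right is forced to be $0$. This yields, for every $a, c \in [n]$,
\[
\Pr[B_c \notin \ir(a)] \;\le\; e^{\epsilon}\cdot \Pr[B_c \notin \ir(c)] + \delta \;=\; \delta,
\]
so that $\Pr[B_c \in \ir(a)] \ge 1 - \delta$. In other words, even when the client is really trying to fetch $B_a$, each \emph{other} ball $B_c$ is still downloaded with probability at least $1-\delta$: were this not true, an adversary could distinguish query $a$ from query $c$ with advantage exceeding what the DP guarantee permits.

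Finally, I would sum over $c$ using linearity of expectation. The size of the transcript (the number of downloads) for query $a$ satisfies
\[
\E[\,|\ir(a)|\,] \;=\; \sum_{c=1}^{n} \Pr[B_c \in \ir(a)] \;\ge\; n(1-\delta),
\]
which is exactly the claimed lower bound. The only delicate step is step one: tying ``errorless'' to ``$B_c$ must physically be downloaded'' requires the opaque-ball restriction of the model, and everything else is a one-line consequence of Lemma~\ref{lemma:pir1}. No separate argument about server-side computation is needed, since in the balls and bins model the transcript already captures the number of records touched.
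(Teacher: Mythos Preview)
Your proof is correct and follows essentially the same route as the paper: use errorlessness to get $\Pr[B_c\notin\ir(c)]=0$, plug this into the second inequality of Lemma~\ref{lemma:pir1} to obtain $\Pr[B_c\in\ir(a)]\ge 1-\delta$ for every $a,c$, and sum over $c$ via linearity of expectation. The only addition is your explicit justification that errorlessness forces $B_c$ to be downloaded in the balls and bins model, which the paper leaves implicit.
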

\begin{proof}
By Lemma~\ref{lemma:pir1}, we know that
$$
\Pr[B_j \notin \ir(i)] \le e^\epsilon \Pr[B_j \notin \ir(j)] + \delta = \delta
$$
since $\ir$ is errorless and $\Pr[B_j \notin \ir(j)] = 0$.
Therefore, we know that $\Pr[B_j \in \ir(i)] \ge 1 - \delta$ and
$$
\E[|\ir(i)|] = \sum\limits_{j \in [n]} \Pr[B_j \in \ir(i)] \ge (1-\delta)n
$$
completing the proof.
\end{proof}

For algorithms with general server computation, we can the interpret the
above results as a lower bound on server computation.
This result is extremely strong since $n$ server operations must be executed
even when increasing the privacy budget. Therefore, the relaxation
to $\edpir$ does not result in any gain compared to PIR.
For $\eddpir$, one could increase $\delta$ to decrease costs.
However, typical privacy requires $\delta = \negl(n)$ resulting
in almost no gain.

To circumvent this result, we move to the case where $\dpir$ has a non-zero
error rate $0 < \alpha \le 1$.
That is, $\dpir$ will only retrieve the desired record with probability
$1 - \alpha$ depending only on the internal randomness of $\dpir$.
Our hope is that a very small $\alpha$ may significantly improve
efficiency of $\dpir$ and bypass this negative result. We show:
\begin{theorem}
\label{thm:low_ir}
If $\ir$ is an $\eddpir$ scheme in the balls and bins model
with error probability $\alpha > 0$, then $\ir$
performs
$$
\Omega\left( \frac{(1-\alpha-\delta) \cdot n}{e^\epsilon} \right)
$$
operations in expectation.
\end{theorem}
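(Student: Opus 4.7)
The plan is to mirror the structure of the errorless case (Theorem~\ref{thm:low_ir_no_error}) but to apply the differential privacy inequality in the opposite direction so that the $e^\epsilon$ factor ends up in the denominator rather than multiplying the error term $\alpha$. The starting point is that in the balls and bins model, the record $B_j$ can only be returned to the client if it was actually downloaded during the transcript $\ir(j)$. Together with the assumption that $\ir$ has error probability at most $\alpha$, this immediately yields the correctness bound $\Pr[B_j \in \ir(j)] \ge 1 - \alpha$ for every $j \in [n]$.

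Next I would invoke the first inequality of Lemma~\ref{lemma:pir1} with the choice $a = j$, $b = i$, $c = j$, which gives
\[
\Pr[B_j \in \ir(j)] \le e^\epsilon \cdot \Pr[B_j \in \ir(i)] + \delta.
\]
Rearranging and plugging in the correctness bound from the previous step yields
\[
\Pr[B_j \in \ir(i)] \ge \frac{\Pr[B_j \in \ir(j)] - \delta}{e^\epsilon} \ge \frac{1 - \alpha - \delta}{e^\epsilon}
\]
for every $i, j \in [n]$. The key difference from the errorless proof is that here one uses the inequality as a lower bound on the probability of seeing $B_j$ under an unrelated query, rather than as an upper bound on its absence, so the $e^\epsilon$ factor sits in the denominator.

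Finally I would sum over all records $j \in [n]$ using linearity of expectation, exactly as in the proof of Theorem~\ref{thm:low_ir_no_error}:
\[
\E[|\ir(i)|] \;=\; \sum_{j \in [n]} \Pr[B_j \in \ir(i)] \;\ge\; \frac{(1-\alpha-\delta)\cdot n}{e^\epsilon},
\]
which is the desired $\Omega\!\left((1-\alpha-\delta)n / e^\epsilon\right)$ lower bound on the expected number of operations. There is no real obstacle in this argument; the only subtle point is picking the right direction of the DP inequality so that the $\alpha$ term is not amplified by $e^\epsilon$ (which would make the bound vacuous whenever $\alpha \ge e^{-\epsilon}$), and relying on the balls and bins assumption to convert the correctness of the output into membership of a specific ball in the transcript.
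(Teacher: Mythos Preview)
Your proposal is correct and follows essentially the same route as the paper: invoke Lemma~\ref{lemma:pir1} in the direction $\Pr[B_j\in\ir(j)]\le e^\epsilon\Pr[B_j\in\ir(i)]+\delta$, combine with the correctness bound $\Pr[B_j\in\ir(j)]\ge 1-\alpha$, and sum over $j$. The only cosmetic difference is that the paper sums over $j\ne i$ (obtaining an $(n-1)$ factor) while you sum over all $j\in[n]$; both give the same $\Omega$ bound.
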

\begin{proof}[Proof of Theorem~\ref{thm:low_ir}]
Since IR has error probability at most $\alpha$,
$\Pr[B_j\in\ir(j)]\ge 1-\alpha$, for all $j\in [n]$.
By Lemma~\ref{lemma:pir1}, for all $j \ne i$,
$$
\Pr[B_j\in\ir(j)] \le e^\epsilon \Pr[B_j \in \ir(i)]+\delta.
$$
Equivalently, this means that $\Pr[B_j \in \ir(i)] \ge \frac{1-\alpha-\delta}{e^\epsilon}$. So,
$$
\E[|\ir(i)|] \ge \sum_{j\ne i}\Pr[B_j \in \ir(i)]\geq(n-1)\frac{1-\alpha-\delta}{e^\epsilon}
$$
that yields the theorem.
\end{proof}

For any constant error $\alpha > 0$ and typical privacy budgets of
$\epsilon = \Theta(1)$, errorless $\dpir$ schemes require
$\Omega(n)$ server operations.
However, it seems possible to bypass the first negative result with very
small error.
The above lower bound does not preclude the existence
of a $\edpir$ with $\epsilon = \Omega(\log n)$ and small
error probability $\alpha > 0$
that only requires a constant number of server operations.
In Section~\ref{sec:ir}, we present an
$\edpir$ scheme that uses $O(1)$ communication when $\epsilon = O(\log n)$.
It turns out this $\edpir$ scheme asymptotically matches
the lower bound for all values
of $\epsilon \ge 0$.

We also extend our negative results for
$\dpir$ in Section~\ref{sec:low_ir_mult} where
we present lower bounds when outsourcing storage
to multiple, non-colluding servers.

\subsection{DP-RAM}
We now move to our negative results for $\dpram$.
Proving lower bounds for $\dpram$ is more challenging than $\dpir$
due to the private
setup phase and the client's private memory.
We can no longer directly bound the probability that blocks need
to be retrieved and, instead, examine the transcripts seen by the adversary.
Let us fix any transcript $\calT$ that has non-zero probability of
being viewed by the adversary on any query sequence $Q$ of length $l$.
We show that every other query sequence of length $l$ must also
induce $\calT$ with non-zero probability.
In fact, the probabilities
that any two fixed query sequences induce $\calT$ as the view of the
adversary are strong related by their Hamming distance and
the privacy budget, $\epsilon$,
as described in the following:
\begin{lemma}
\label{lem:aposteriori}
Let $\ram$ be a $\edpram$ scheme for any $\epsilon \ge 0$.
For every distribution $\calD$
on query sequences of length $l$, random variable $\dist \sim \calD$
and for any two fixed
sequences $Q_1, Q_2$ such that $\Pr[D = Q_2] > 0$, then
\begin{align*}
        \frac{\Pr[D = Q_1 \mid \ram(D) = \calT]}{\Pr[D = Q_2 \mid \ram(D) = \calT]} \ge e^{-\epsilon \cdot d(Q_1,Q_2)} \frac{\Pr[D = Q_1]}{\Pr[D = Q_2]}.
\end{align*}
\end{lemma}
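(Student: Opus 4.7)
The plan is to apply Bayes' rule to each conditional probability, cancel the common evidence term, and then reduce the claim to a group-privacy style bound on $\Pr[\ram(Q_1)=\calT]/\Pr[\ram(Q_2)=\calT]$. Concretely, I would first write
\[
\Pr[D=Q_i \mid \ram(D)=\calT] \;=\; \frac{\Pr[\ram(D)=\calT \mid D=Q_i]\cdot\Pr[D=Q_i]}{\Pr[\ram(D)=\calT]}
\]
for $i\in\{1,2\}$, observe that $\Pr[\ram(D)=\calT \mid D=Q_i]=\Pr[\ram(Q_i)=\calT]$ since $\ram$'s distribution depends only on the input query sequence, and note that the denominator $\Pr[\ram(D)=\calT]$ cancels when taking the ratio. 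This reduces the lemma to proving
\[
\frac{\Pr[\ram(Q_1)=\calT]}{\Pr[\ram(Q_2)=\calT]} \;\ge\; e^{-\epsilon\cdot d(Q_1,Q_2)}.
\]

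To establish this bound I would use a standard hybrid (group-privacy) argument. Let $k=d(Q_1,Q_2)$ and construct a chain $Q_1=P_0,P_1,\ldots,P_k=Q_2$ of query sequences where each consecutive pair $P_j,P_{j+1}$ differs in exactly one coordinate, i.e., $d(P_j,P_{j+1})=1$. Applying the $\epsilon$-differential privacy guarantee to the singleton set $S=\{\calT\}$ for each adjacent pair yields $\Pr[\ram(P_j)=\calT] \le e^{\epsilon}\Pr[\ram(P_{j+1})=\calT]$. Telescoping over $j=0,\ldots,k-1$ gives $\Pr[\ram(Q_1)=\calT]\le e^{\epsilon k}\Pr[\ram(Q_2)=\calT]$, which is the desired inequality once inverted. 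Substituting back and multiplying by $\Pr[D=Q_1]/\Pr[D=Q_2]$ completes the argument.

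The main delicate point, rather than any single calculation, is to be careful with two subtleties. First, all manipulations must be conditioned on $\Pr[\ram(D)=\calT]>0$ and $\Pr[D=Q_2]>0$ so that the ratios are well-defined; if $\Pr[\ram(Q_2)=\calT]=0$ then pure $\epsilon$-DP forces $\Pr[\ram(Q_1)=\calT]=0$ as well, and the inequality holds trivially. Second, one must justify that applying DP to the singleton event $\{\calT\}$ is legitimate, i.e., that transcripts take values in a (countable or at worst measurable) space on which point masses are meaningful; in the balls-and-bins model transcripts are finite sequences of upload/download commands with bounded-length content, so this is immediate. Once these are handled, the rest is just the hybrid telescoping and the Bayesian bookkeeping, both of which are routine.
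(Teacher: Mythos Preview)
Your proposal is correct and follows essentially the same approach as the paper: apply Bayes' rule, cancel the common evidence term, and invoke the $\epsilon$-DP bound across the Hamming distance between $Q_1$ and $Q_2$. The paper's proof is terser—it applies Bayes' law once and uses the group-privacy inequality $\Pr[\ram(Q_1)=\calT]\ge e^{-\epsilon\cdot d(Q_1,Q_2)}\Pr[\ram(Q_2)=\calT]$ without spelling out the hybrid chain or the edge cases you carefully address—but the underlying argument is identical.
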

\begin{proof}
By Bayes' law, we have
\begin{align*}
\Pr[&\dist=Q \mid \ram(\dist)=\trans]\\
&= \frac{\Pr[\ram(\dist)=\trans\mid \dist=Q]\Pr[\dist=Q]}{\Pr[\ram(\dist)=\trans]}\\
&\ge e^{-\epsilon\cdot d_H(Q, Q')}
   \frac{\Pr[\ram(\dist)=\trans\mid \dist=Q']\Pr[\dist=Q]}{\Pr[\ram(\dist)=\trans]}\\
&= e^{-\epsilon\cdot d_H(Q, Q')}\frac{\Pr[\dist=Q'\mid\ram(\dist)=\trans]\Pr[\dist=Q]}{\Pr[\dist=Q']}
\end{align*}
giving us our result.
\end{proof}

Next, we prove that if a query sequence has positive probability,
then it stays so even after seeing a transcript.
This holds for all {\em a priori} query sequences distributions $\calD$.

\begin{lemma}
\label{lem:deltaZero}
Let $\ram$ be an $\edpram$ and let $\trans$ be a transcript for
which there exists at least one query sequence $Q$ such that
$$\Pr[\ram(Q) = \trans] > 0.$$
Then for all distributions $\dist$ on the set of query sequences and for all
$Q'$ such that $\Pr[\dist=Q']>0$ it holds that
$$\Pr[\dist=Q'|\ram(\dist)=\trans]>0.$$
\end{lemma}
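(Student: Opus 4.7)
The plan is to exploit the fact that pure differential privacy ($\delta = 0$) forces a zero--one dichotomy on the transcript support: once $\trans$ has positive probability under \emph{some} length-$l$ query sequence, it must have positive probability under \emph{every} length-$l$ query sequence. Given this, the conclusion follows by a direct application of Bayes' rule.

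First I would formalize the propagation step for adjacent sequences. For any two length-$l$ sequences $Q$ and $\hat Q$ with $d(Q,\hat Q)=1$, applying the $\epsilon$-DP guarantee to the singleton set $S = \{\trans\}$ yields
\[
\Pr[\ram(\hat Q) = \trans] \le e^\epsilon \cdot \Pr[\ram(Q) = \trans].
\]
Contrapositively, $\Pr[\ram(Q) = \trans] > 0$ implies $\Pr[\ram(\hat Q) = \trans] > 0$, and by symmetry the same holds swapping $Q$ and $\hat Q$. This step is where pure DP is essential: with an additive slack $\delta > 0$ one could not conclude positivity from zero probability on one side.

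Next I would chain this observation along an arbitrary pair of length-$l$ sequences. Any two sequences $Q, Q' \in \calQ^l$ are connected by a path of at most $l$ pairwise-adjacent sequences obtained by rewriting one coordinate at a time. Starting from the given sequence $Q$ with $\Pr[\ram(Q) = \trans] > 0$ and iterating the propagation step along this chain shows that $\Pr[\ram(Q') = \trans] > 0$ for every length-$l$ sequence $Q'$, and in particular for every $Q'$ in the support of the \emph{a priori} distribution $\dist$ (which the ambient DP convention takes to be supported on length-$l$ sequences).

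Finally I would close the argument via Bayes' rule. The marginal probability of the transcript is
\[
\Pr[\ram(\dist) = \trans] \;=\; \sum_{Q''} \Pr[\dist = Q''] \cdot \Pr[\ram(Q'') = \trans],
\]
and by the previous paragraph every term with $\Pr[\dist = Q''] > 0$ has $\Pr[\ram(Q'') = \trans] > 0$; since at least one such $Q''$ exists (e.g. the given $Q'$), the sum is strictly positive. Thus
\[
\Pr[\dist = Q' \mid \ram(\dist) = \trans] \;=\; \frac{\Pr[\ram(Q') = \trans] \cdot \Pr[\dist = Q']}{\Pr[\ram(\dist) = \trans]}
\]
is well-defined and strictly positive, completing the proof. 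The only delicate point is the propagation step itself, which fails without pure differential privacy; everything else is a routine Bayesian bookkeeping.
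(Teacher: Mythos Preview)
Your proof is correct and follows essentially the same approach as the paper: both use pure $\epsilon$-DP to propagate positivity of $\Pr[\ram(\cdot)=\trans]$ across all length-$l$ sequences and then close with Bayes' rule; the paper argues by contradiction, applying the composed bound $e^{d_H(Q,Q')\epsilon}$ in a single step, whereas you chain adjacent steps explicitly. One cosmetic remark: the displayed inequality $\Pr[\ram(\hat Q)=\trans]\le e^\epsilon\Pr[\ram(Q)=\trans]$ does not by itself give the implication you label ``contrapositively''---it is the symmetric direction (which you do invoke) that yields $\Pr[\ram(\hat Q)=\trans]\ge e^{-\epsilon}\Pr[\ram(Q)=\trans]>0$.
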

\begin{proof}
Let $Q$ be any query sequence
such that $\Pr[\dist=Q] >  0$ and $\Pr[\ram(\dist)=\trans\mid \dist=Q]>0$.
Assume by contradiction that $\Pr[\dist=Q']>0$ but
$\Pr[\dist=Q'\mid\ram(\dist)=\trans]=0$ for some query sequence $Q'$.
By Bayes' Law,
\begin{align*}
        0&=\Pr[\dist=Q'\mid\ram(\dist)=\trans]\\
	&=\frac{\Pr[\ram(\dist)=\trans\mid \dist=Q']\cdot\Pr[\dist=Q']}{\Pr[\ram(\dist)=\trans]}
\end{align*}
which implies that $\Pr[\ram(\dist)=\trans\mid \dist=Q']=0$.
On the other hand,
by $\epsilon$-differential privacy, we have that for all $q$ such that $\Pr[\dist=Q]>0$,
\begin{align*}
\Pr[&\ram(\dist)=\trans\mid \dist=Q]\\
&\leq e^{d_H(Q, Q')\epsilon}\cdot \Pr[\ram(\dist)=\trans\mid \dist=Q'] = 0
\end{align*}
providing our contradiction.
\end{proof}

We are now ready to prove our $\dpram$ lower bound.
For the sake of the completeness, we suppose
that the $\dpram$ scheme only retrieves the desired block with
probability at least $1 - \alpha$ based only on internal randomness.
We show that:

\begin{theorem}
\label{thm:low_ram}
If $\ram$ is an $\edpram$ scheme in the balls and bins model
with error probability $\alpha \ge 0$ then $\ram$ performs
$$
\Omega\left( \log_c \left(\frac{(1-\alpha) \cdot n}{e^\epsilon}\right) \right)
$$
expected amortized operations per query
when the client has storage for $c$ balls.
\end{theorem}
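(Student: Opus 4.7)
The plan is to combine Lemma~\ref{lem:aposteriori} with a combinatorial reachability bound in the balls-and-bins model, in the spirit of the classical ORAM lower bound of Goldreich--Ostrovsky. I take $\calD$ to be the uniform distribution over length-$l$ sequences of reads to uniformly random addresses in $[n]$, with the aim of lower bounding the amortized per-query cost $\E\bigl[\sum_j t_j\bigr]/l$ where $t_j$ is the number of balls transferred during query $j$. First I extract a posterior bound: applied to two query sequences that differ only in coordinate $j$ (so $d(Q_1,Q_2)=1$), Lemma~\ref{lem:aposteriori} combined with the uniform prior gives $\Pr[Q_j = q \mid \trans]/\Pr[Q_j = q' \mid \trans] \le e^\epsilon$ for all $q,q' \in [n]$, and summing over $q \in [n]$ forces $\Pr[Q_j = q \mid \trans] \le e^\epsilon/n$ for every positive-probability transcript $\trans$.

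Next I define $X_j \subseteq [n]$ as the set of query values $q$ for which the algorithm correctly outputs $B_q$ when $Q_j = q$, with the earlier queries and the client randomness held fixed. In the balls-and-bins model, because the client's persistent state consists of only $c$ balls, its decisions during query $j$ are constrained by the bounded pointer content of this state; modeling the execution on query $j$ as a decision tree whose internal nodes select the next server address based on the $c$ balls currently held and whose edges branch on the downloaded ball yields a count of at most $c^{O(t_j)}$ distinct access sequences that can be produced as $Q_j$ varies, and hence the reachability bound $|X_j| \le c^{O(t_j)}$. Correctness of query $j$ gives $\Pr[Q_j \in X_j] \ge 1 - \alpha$, and the posterior bound yields $\Pr[Q_j \in X_j] \le \E[|X_j|] \cdot e^\epsilon/n$, so $\E[|X_j|] \ge (1-\alpha)n/e^\epsilon$. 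Converting the expected-size bound into a constant-probability event via a reverse-Markov argument (using the trivial upper bound $|X_j| \le n$) and applying the reachability bound on that event yields $t_j = \Omega(\log_c((1-\alpha)n/e^\epsilon))$ with constant probability; taking expectations and summing over $j$ gives the amortized bound claimed in the theorem.

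The main technical obstacle is the reachability bound in the balls-and-bins model. Naively, each downloaded ball can carry a large number of bits, which would allow $\poly(m)$ branching in the client's subsequent decisions and yield no useful lower bound. The delicate point is to argue that the client's \emph{choice of next server address} is the only branching that matters, and that such a choice is computable from the at-most $O(c \log m)$ bits of client state, capping the effective per-step branching at $O(c)$. A secondary subtlety concerns the amortization: since $c^{t_j}$ is convex in $t_j$, Jensen's inequality does not convert $\E[c^{t_j}] \ge K$ into $\E[t_j] \ge \log_c K$ directly, so one must pass through a reverse-Markov high-probability event before taking expectations, and avoiding spurious factors in the final amortization may require a more refined chronogram-style potential argument that tracks the dependence of $X_j$ on the prefix $Q_1,\dots,Q_{j-1}$.
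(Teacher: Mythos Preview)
Your high-level strategy matches the paper's: combine a posterior bound coming from Lemma~\ref{lem:aposteriori} with a balls-and-bins reachability count in the style of Goldreich--Ostrovsky. Your derivation of the per-coordinate posterior inequality $\Pr[Q_j=q\mid\trans]\le e^\epsilon/n$ is correct and clean. The problem is the reachability step.

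The claim $|X_j|\le c^{O(t_j)}$, with $X_j$ defined as a function of the query \emph{prefix and client randomness}, is false. Once the prefix and randomness are fixed, the client still receives $Q_j$ as an explicit input, and nothing prevents it from choosing a different first server address for each of the $n$ values of $Q_j$ (e.g., ``download cell $Q_j$''). That already gives $|X_j|=n$ with $t_j=1$, so no nontrivial bound follows. Your last paragraph locates the obstacle in the bit content of downloaded balls, but that is not the issue in the balls-and-bins model (balls are opaque); the branching you must kill comes from the input $Q_j$ itself. Relatedly, the inequality $\Pr[Q_j\in X_j]\le \E[|X_j|]\cdot e^\epsilon/n$ only follows from your posterior bound when $X_j$ is measurable with respect to the transcript $\trans$; your $X_j$ is a function of $(Q_{<j},r)$, which is a strictly finer conditioning, and under that conditioning $Q_j$ is simply uniform (no $e^\epsilon$ appears).

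The fix is to condition on the transcript, which pins down all server addresses and makes the Goldreich--Ostrovsky count valid: given $\trans$, the number of distinct output \emph{sequences} the client can produce is at most $c^{O(kl)}$, where $kl$ is the total number of operations in $\trans$. But then the reachability set does not factor per coordinate (the output at query $j$ depends on the entire execution path up to $j$), so the per-coordinate posterior bound is no longer the right partner. This is exactly why the paper works at the level of full sequences: it fixes $\trans$, bounds the number of correctly answerable sequences by $c^{2kl}$, and applies Lemma~\ref{lem:aposteriori} at Hamming distance up to $l$ to get $\max_Q\Pr[U=Q\mid\trans]\le e^{l\epsilon}/n^l$; combining with correctness yields $c^{2kl}\ge (1-\alpha)^l n^l e^{-l\epsilon}$ and hence the stated amortized bound. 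Repairing your argument leads to essentially this global computation.
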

\begin{proof}
Fix $Q = (q_1,\ldots,q_l)$ to be any sequence of queries of length $l>c$.
Then, $\client_c$ can only perform upload/\allowbreak download operations
with $\server_M$.
Denote the expected amortized bandwidth by $k$.
The server locations of all operations are given by the transcript, $\ram(q)$.
However, which of the $c$ client locations used for each operation
remains hidden.
For each download and upload, there are $c$ possible execution paths, one
path for each of the $c$ client memory locations.
Also, after each operation, $\client_c$ may use the data stored in any
of the $c$ client memory locations to answer a query.
Altogether, at most $kl$ blocks of bandwidth are used for all $l$ queries.
Therefore, there are at most $c^{2kl}$ different sequences of blocks
that may be returned.
Since $\ram$ can only fail with $\alpha$ probability for each query,
it must satisfy at least $(1-\alpha)^l n^l$ different access patterns.
So, $c^{2kl} \ge (1-\alpha)^l n^l$.

Let $\trans$ be any transcript with positive probability and consider
the uniform distribution $U$ over the set $[n]^l$ of query sequences of length $l$.
By Lemma~\ref{lem:deltaZero},
there exists $Q\in [n]^l$ such that
$$\Pr[U=Q\mid\ram(U)=\trans]\ge\frac{1}{n^l}\geq\frac{(1-\alpha)^l}{c^{2kl}}.$$
For every $Q'\ne Q$, by Lemma~\ref{lem:aposteriori}, we have
$$
\Pr[U=Q'\mid\ram(U)=\trans]
\ge e^{-l\epsilon}\frac{(1-\alpha)^l}{ c^{2kl} }.$$
Finally, we know that
\begin{align*}
1 &\geq \sum\limits_{Q'\ne Q} \Pr[U=Q'\mid \ram(U)=\trans]
\ge e^{-l\epsilon}(n^l - 1)\frac{(1-\alpha)^l}{ c^{2kl} }
\end{align*}
completing the proof.
\end{proof}

Unlike $\dpir$, we do not need a non-zero error probability to
get a constant overhead scheme.
From the above lower bound, it seems like the best one can achieve
is a perfectly correct $\edpram$ with $\epsilon = \Theta(\log n)$
that only requires $O(1)$ overhead with small client storage.
In Section~\ref{sec:ram}, we show that such a scheme does exist.

\paragraph{Discussion about lower bounds.}
It has been shown that previous ORAM lower bounds come with many cavaets.
The first lower bound~\cite{GO96} in the balls and bins model
was for statistical security as pointed out in~\cite{BN16}.
The assumption of statistical security is troublesome due to the fact that
all ORAM schemes require the use of encryption which only provides computational
security. To abstract away this issue, it is assumed that records
were opaque balls and were hidden even against computationally unbounded
adversaries. However, this abstraction was still not sufficient since the
majority of ORAM schemes still required the use of pseudorandom functions.
Furthermore, the $\dpram$ constructions of~\cite{WCM16} and~\cite{CCM18}
require encryption and are burdened by the same cavaets.
We explain why these issues do not apply to our results.
In our work, we consider typical differential privacy notions (not the computational variants described in~\cite{MPR09}) where adversaries are computationally
unbounded. Our $\dpram$ scheme in Section~\ref{sec:ram} when only allowing
retrievals
does not require encryption and provides differentially private access
to public data against computationally unbounded adversaries.
All ORAM schemes still require encryption even when only retrievals are permitted.
In the case that we wish to store encrypted data or protect overwrites
in our $\dpram$ scheme,
we must apply the abstraction of opaque balls to go around encryption.
With this abstraction, our $\dpram$ scheme is differentially private.
Recent results in~\cite{LN18} present $\Omega(\log (n/c))$ lower bounds
for ORAM against computational adversaries with
passive servers and general storage schemes.
Furthermore, work in~\cite{PY18} extends the lower bound for
$\eddpram$ where $\epsilon = O(1)$ and $\delta \le 1/3$.
However, the lower bounds in~\cite{PY18} have an exponentially worse
dependency on $\epsilon$ compared to our results.

\section{An Insecure Construction}
\label{sec:insecure}
Before presenting our constructions,
we consider a simple and tempting, but insecure, construction.
The lower bounds presented in Section~\ref{sec:low} show that
the best privacy any constant overhead storage primitive can achieve
will be $\epsilon = \Theta(\log n)$. With such weak privacy requirements,
it might seem very easy to construct these primitives at first.
We caution that schemes with these weak privacy requirements must be
constructed carefully as slight variants of our later schemes
could also end up being insecure. To our knowledge, our schemes
are the simplest constructions that achieve $\epsilon = \Theta(\log n)$
differential privacy and small overhead.

The main idea of the strawman solution derives from the fact that
$\epsilon = \Theta(\log n)$. As a result, the desired block
should be queried with probability a multiplicative factor of $\poly(n)$
larger to compared to any other block.
To achieve this, one could query the desired block with probability $1$
and all other blocks with probability $1/n$. This scheme would have
$O(1)$ bandwidth in expectation, perfect correctness and no
client storage requirements.
However, we show that this scheme
is really an $\eddpir$ with $\epsilon = \Theta(\log n)$ and $\delta = (n-1)/n$.
Denote the above scheme by $\ir$ and $\ir(i)$ the set of blocks
returned when querying for $i$.
Pick any two queries $i \ne j$.
Note, $\Pr[B_i \notin \ir(i)] = 0$ and $\Pr[B_i \notin \ir(j)] = (n-1)/n$.
Then, $(n-1)/n = \Pr[B_i \notin \ir(j)] \le e^\epsilon \Pr[B_i \notin \ir(i)] + \delta$ which means that $\delta \ge (n-1)/n$.
Therefore, the above scheme would not provide as much privacy as possible
as $\delta$ approaches $1$ as $n$ increases.

We use the above strawman to show that attentiveness and rigor are required when
constructing differentially private storage schemes even with such
weak privacy guarantees.
The schemes that we will present for $\dpir$, $\dpram$ and $\dpkvs$
are the simplest algorithms that, to our knowledge,
achieve our desired privacy of $\epsilon = \Theta(\log n)$
and small overhead in efficiency.

\section{DP-IR Construction}
\label{sec:ir}
For errorless $\dpir$, the asymptotically optimal balls and bins algorithms
is required to download the entire database regardless of the privacy budget.
With active servers that can perform computation, servers must
perform $n$ operations which is identical to PIR.
Therefore, there is no reason to use differentially private access
when oblivious access has the same efficiency.

We now move to the more interesting case of $\dpir$ with errors.
By introducing a small amount of error $\alpha > 0$, we hope to find
algorithms more efficient than PIR for larger privacy budgets where
our lower bounds from Section~\ref{sec:low} no longer hold.
We show that the simplest algorithm ends up being optimal.
In particular, the client will download the desired block as well as several
other blocks simultaneously. The hope is that the adversary cannot determine
the real retrieval from all the fake retrievals.
In addition, with probability $\alpha$, we only perform fake retrievals
and error. The pseudocode of this scheme can be found in
Appendix~\ref{sec:dpir_code}.
The proof of the following theorem is found in Appendix~\ref{dpir-proof}

\begin{theorem}
\label{thm:dpir}
For any $\epsilon \ge 0$, there exists a $\edpir$
that returns $O(n/e^\epsilon)$
blocks for any constant error probability $\alpha > 0$.
\end{theorem}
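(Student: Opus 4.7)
The plan is to analyze the natural noise-injection scheme sketched right before the theorem. On query $i$, the client flips a biased coin: with probability $1-\alpha$ it samples a uniform random $k$-subset $T\subseteq[n]$ \emph{containing} $i$ and downloads the corresponding blocks (returning $B_i$); with probability $\alpha$ it samples a uniform random $k$-subset $T\subseteq[n]\setminus\{i\}$ and downloads those blocks (returning $\bot$). To ensure the adversary's view depends only on the set $T$, the client also shuffles the download order uniformly at random.

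The first step is to write down, for each query $i\in[n]$ and each size-$k$ set $T\subseteq[n]$, the probability that the observed transcript equals $T$:
\[
\Pr[\ir(i)=T] \;=\; \begin{cases} (1-\alpha)\big/\binom{n-1}{k-1}, & i\in T,\\[2pt] \alpha\big/\binom{n-1}{k}, & i\notin T.\end{cases}
\]
The second step is to verify $\epsilon$-DP by bounding $\Pr[\ir(i)=T]/\Pr[\ir(j)=T]$ for any adjacent queries $i\neq j$. Splitting into the four cases based on whether $i,j\in T$, the two symmetric cases $\{i,j\}\subseteq T$ and $\{i,j\}\cap T=\emptyset$ give ratio $1$, while the two asymmetric cases give $\tfrac{(1-\alpha)(n-k)}{\alpha k}$ and its reciprocal. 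Since the reciprocal is automatically at most $1\le e^\epsilon$, privacy reduces to the single inequality $\tfrac{(1-\alpha)(n-k)}{\alpha k}\le e^\epsilon$.

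The third step is to choose $k$ so that this worst-case ratio equals exactly $e^\epsilon$, which yields
\[
k \;=\; \frac{(1-\alpha)\,n}{(1-\alpha)+\alpha\,e^\epsilon} \;=\; O\!\left(\frac{n}{e^\epsilon}\right)
\]
for any constant $\alpha\in(0,1)$. Summing the per-transcript inequality over any adversary event $S$ gives $\Pr[\ir(i)\in S]\le e^\epsilon\Pr[\ir(j)\in S]$, i.e., pure $\epsilon$-DP. Communication is exactly $k=O(n/e^\epsilon)$ blocks per query, and the scheme fails precisely when $i\notin T$, which has probability $\alpha$ by construction.

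The main subtlety I anticipate is aligning the set-level ratio bound with the fact that the adversary's true view is an ordered sequence of ball downloads in the balls-and-bins model. This is handled by the random shuffle: conditioned on the set $T$, the ordered transcript is uniform over the $k!$ orderings regardless of the issued query, so the set-level ratio bound lifts verbatim to the ordered view. Since the client is stateless and issues no uploads, no further information is revealed, and the per-transcript bound above completes the proof.
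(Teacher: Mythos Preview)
Your proof is correct and follows essentially the same approach as the paper: both analyze the noise-injection scheme by computing $\Pr[\ir(i)=T]$ and then case-splitting on whether the indices $i,j$ lie in the transcript set $T$. The only difference is that in the error branch you sample $T$ uniformly from $[n]\setminus\{i\}$ while the paper samples from all of $[n]$, which yields the slightly different worst-case ratio $\tfrac{(1-\alpha)(n-k)}{\alpha k}$ versus the paper's $\tfrac{(1-\alpha)n}{\alpha K}+1$; both solve to $k=O(n/e^\epsilon)$.
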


The above upper bound asymptotically matches the
lower bound of Theorem~\ref{thm:low_ir} for all values of $\epsilon \ge 0$.
Furthemore, by fixing the privacy budget to be
$\epsilon = \Theta(\log n)$, we can achieve
a constant overhead $\dpir$ scheme with the best privacy according
to Theorem~\ref{thm:low_ir}.

\section{DP-RAM Construction}
\label{sec:ram}

In this section, we give an errorless construction $\oramC$
supporting both retrieval and overwrite operations.
Before describing our $\dpram$ scheme, we note that
one could also use the $\dpir$ scheme from Section~\ref{sec:ir}
as a $\dpram$ scheme without any client storage requirements.
However, this $\dpir$ scheme has non-zero error probabilities
which is inherently unavoidable for $\dpir$
due to our lower bounds in Section~\ref{sec:low}.
On the other hand, our $\dpram$ lower bounds do not preclude the existence
of a perfectly correct $\dpram$ scheme with $\epsilon = \Theta(\log n)$.
In this section, we work towards constructing such a $\dpram$ scheme.

Our scheme will require the client to store some records
in a local {\em stash}.
Our $\oramC$ scheme is parameterized by a probability $p$
describing the independent probability that
each record is stored in the stash.
We assume that
$(\Enc,\Dec)$ is an IND-CPA symmetric-key encryption scheme.
The server's storage will consist of an array, $A$, of $n$ records.

The setup phase of $\oramC$ will consist of populating the server-stored array $A$.
For security parameter $\lambda$,
key $K\rightarrow\{0,1\}^\lambda$ is randomly selected.
$A$ is initialized
by setting $A[i]=\Enc(K, B_i)$ for $i \in [n]$.
The stash
is initialized by independently selecting each
record to be in the stash with probability $p$.
In addition, the client keeps the key $K$ in local storage.

The querying (either retrieval or overwrite) for a record $B_i$ consists of
two phases:
the {\em download} phase followed by the {\em overwrite} phase.
In the download phase,
the client looks for $B_i$ in the stash.
If $B_i$ is found, then $B_i$ is removed and returned.
The client asks the server for $A[j]$,
with $j$ chosen uniformly at random from $[n]$.
If, instead, $B_i$ is not in the stash,
the client asks the server for $A[i]$ that contains an encryption of $B_i$.
If the client is performing a write operation,
then $B_i$ is updated with the new version.
At this point, the client holds the current version of $B_i$.

In the overwrite phase, the current version of $B_i$ is added to the
stash with probability $p$.
If $B_i$ is stored in the stash, then another record is randomly selected,
downloaded from the server,
decrypted and then re-encrypted with fresh randomness
and uploaded to the server.
If $B_i$ is not stored in the stash, then the client
asks the server for $A[i]$, discards the record received
and then uploads to $A[i]$ a freshly computed ciphertext carrying the
current version of $B_i$.

The pseudocode of the algorithms are presented in
Appendix~\ref{sec:ram_code}.
We note that while the above algorithm is simple, the analysis of privacy
is quite complicated.
We show the following about this scheme when $p \le \Phi(n)/n$ for
any $\Phi(n) = \omega(\log n)$:
\begin{theorem}
\label{thm:up_ram}
	There exists an $O(\log n)\text{-}\mathsf{DP}\text{-}\ram$ that
	returns $O(1)$ blocks.
	For any function $\Phi(n) = \omega(\log n)$, the client stores
	$\Phi(n)$ blocks
	of client storage except with probability $\negl(n)$.
\end{theorem}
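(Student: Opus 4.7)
My plan has three components: correctness and bandwidth, client storage, and differential privacy; only the last is nontrivial. Correctness and the $O(1)$ per-query bandwidth are immediate from the code, since each query issues at most one server download in the download phase and at most one download plus one upload (to the same address) in the overwrite phase. For client storage, the critical structural invariant is that at every moment each record $B_j$ is in the stash independently with probability exactly $p$: this holds at initialization by construction, and is preserved after each query, because the queried block is removed and then independently re-added with probability $p$ while no other block's stash membership is touched. Hence the stash size is $\mathrm{Binomial}(n,p)$ with $p\le\Phi(n)/n$, and a Chernoff bound yields $\Pr[\text{stash size} > 2\Phi(n)] \le \exp(-\Omega(\Phi(n))) = \negl(n)$; a union bound over polynomially many query times completes this part.

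For the DP bound I will fix adjacent query sequences $Q_1, Q_2 \in ([n]\times\{\opR,\opW\})^l$ differing only at position $t_0$, where $Q_1$ queries $B_{i_1}$ and $Q_2$ queries $B_{i_2}$. The IND-CPA property of $(\Enc,\Dec)$ together with fresh encryption on every upload reduces the adversary's view to the sequence $T = ((a_1,b_1),\ldots,(a_l,b_l))$ of server addresses alone, so it suffices to bound $\Pr[\ram(Q_1)=T]/\Pr[\ram(Q_2)=T]$. For a query to $B_i$ the address rule is: $a_t=i$ if $B_i$ is not in the stash at the start of the query and $a_t$ is uniform on $[n]$ otherwise; independently, a fresh $\mathrm{Bernoulli}(p)$ coin $Y$ sets $b_t=i$ when $Y=0$ and $b_t$ uniform on $[n]$ when $Y=1$, and $Y$ becomes the stash indicator that the next query to $B_i$ will observe. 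Crucially, the random bits used by different blocks are independent, which lets me factor $\Pr[\ram(Q)=T] = \prod_{j=1}^n \Pr[T^{(j)}_Q \mid \text{queries to }B_j\text{ under }Q]$, where $T^{(j)}_Q$ is the sub-sequence of $T$ at the positions at which $Q$ queries $B_j$. Since $Q_1$ and $Q_2$ agree away from $t_0$, every factor for $j\notin\{i_1,i_2\}$ cancels in the ratio.

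The only nontrivial ratios are on $i_1,i_2$, whose number of queries differs by one: under $Q_1$ the sub-transcript for $i_1$ is an insertion of $(a_{t_0},b_{t_0})$ into the $Q_2$-sub-transcript, and symmetrically for $i_2$. Expanding each per-block probability as a sum over the iid $\mathrm{Bernoulli}(p)$ stash-indicator sequence $X_j^{(0)},X_j^{(1)},\ldots$ and the deterministic address rule above, my plan is to telescope the ratio so that the matched queries on either side of $t_0$ cancel up to a constant correction, leaving
\[
\frac{\Pr\!\bigl[T^{(i_1)}_{Q_1}\bigr]}{\Pr\!\bigl[T^{(i_1)}_{Q_2}\bigr]} \;\le\; \Bigl(\tfrac{(1-p)+p/n}{p/n}\Bigr)^{2} \;=\; O\!\bigl((n/p)^2\bigr),
\]
and the analogous bound for $i_2$ in the reciprocal direction. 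The main obstacle is ensuring that this telescoping does not accumulate a factor per query to the affected block, so that the final bound depends only on the two inserted addresses at $t_0$; this requires a careful Bayesian analysis of the posterior stash indicators given the transcript history, using the fact that the only information about $X_j^{(k)}$ carried by adjacent queries is the same coin, so its contribution on one side cancels against the same contribution on the other. Since $p=\Theta(\Phi(n)/n)$ with $\Phi(n)=\omega(\log n)$, we have $n/p = n^2/\Phi(n) = e^{O(\log n)}$, so multiplying the two per-block factors yields $\Pr[\ram(Q_1)=T]/\Pr[\ram(Q_2)=T] \le e^{O(\log n)}$ and hence $\epsilon=O(\log n)$ as claimed.
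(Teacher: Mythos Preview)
Your approach is correct and uses a genuinely different decomposition from the paper's. The paper factors the likelihood ratio $\Pr[\ram(Q')=\calT]/\Pr[\ram(Q)=\calT]$ over the $l$ \emph{time positions}, shows that the $j$-th factor depends only on the overwrite of the previous query to the same block, and then proves that for adjacent sequences differing at position $k$ only the three positions $\{k,\nxt(Q,k),\nxt(Q',k)\}$ contribute a nontrivial factor, each bounded by $n^{O(1)}/p^{O(1)}$. You instead factor over the $n$ \emph{blocks}, using the independence of the per-block randomness (initial stash bit, overwrite coins, uniform draws) to reduce immediately to the two factors for $i_1$ and $i_2$; each per-block sub-transcript is then a two-state hidden Markov chain in which inserting or deleting one query alters only $O(1)$ transition factors. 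The two routes are dual: the paper's three ``bad'' time positions are precisely $t_0$ together with the next query to $i_1$ and to $i_2$, which is exactly where your two per-block chains are perturbed. Your decomposition reaches the ``only $O(1)$ factors matter'' conclusion more directly; the paper's per-position bounds have the minor advantage of handling uniformly the edge cases where $t_0$ is the first or last query to a block, which you will have to treat separately. One quantitative caveat: the internal link factor of your chain is $\gamma(b,a')=(1-p)\mathbf{1}[b=a'=j]+p/n^2$, whose max-to-min ratio is $\Theta(n^2/p)$, so the worst-case per-block ratio is $\Theta(n^4/p^2)$ rather than the $\bigl(((1-p)+p/n)/(p/n)\bigr)^2=\Theta((n/p)^2)$ you display (that expression captures only the boundary $\alpha,\beta$ factors). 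This does not affect the $\epsilon=O(\log n)$ conclusion, but you should compute $\gamma$ explicitly when you carry out the telescoping.
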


This scheme is, essentially, the best privacy that can be achieved
by an errorless $\dpram$ scheme with constant overhead
according to Theorem~\ref{thm:low_ram}.
Since we use encryption, our $\dpram$ satisfies computational
differentially privacy using simulators (see SIM-CDP in~\cite{MPR09}).
Our proof uses a simulator that replaces all encryptions of records
with randomly generated contents.

\noindent{\em Discussion about encryption.} The above $\dpram$ scheme assumes that both
record retrievals and overwrites are permitted. To hide whether queries
are overwrites or retrievals, any $\dpram$ scheme must use encryption.
In the case that we wish to only permit retrievals, we note that
the above $\dpram$ scheme no longer requires encryption and can provide
differentially private access to public data without computational assumptions.
In particular, the entire overwrite phase may be skipped.
Retrieval-only $\dpram$ and $\dpir$ only differ by
their requirements on the client's state.
Therefore, using client state is another way to bypass the strong
lower bound shown in Theorem~\ref{thm:low_ir_no_error} for errorless $\dpir$ schemes.

\subsection{Roadmap of the Proof}
To start, we show that if we choose that $p \le c/n$ where
$c = \omega(\log n)$, then the client will store at most $O(c)$
blocks. The proof is an application of Chernoff Bounds and postponed
to Appendix~\ref{sec:ram_proofs_app}.

The technical crux of the privacy analysis lies in bounding the following
ratio,
for every transcript $\calT$ seen by the adversary,
and for every two neighboring query sequences $Q$ and $Q'$,
\begin{align}\label{eqn:ratio}
\frac{\Pr[\ram(Q')=\calT]}{\Pr[\ram(Q)=\calT]}
\end{align}
where $\Pr[\ram(Q)=\calT]$ denotes the probability that $\ram$ executing
on $Q$ produces the transcript $\calT$.
In the first step,
we show that the transcript of the adversary at a single query on $B_q$
is only dependent on the most recent query that also queries
for $B_q$. In other words, the ratio is upper bounded by
$$
\prod_{i=1}^l \frac{\Pr[\ram_i(Q')=\calT_i\mid\ram_{\prev(Q', i)}(Q')=\calT_{\prev(Q',i)}]\ }{\Pr[\ram_i(Q)=\calT_i\mid\ram_{\prev(Q, i)}(Q)=\calT_{\prev(Q, i)}]}
$$
where $\prev(Q, i)$ is the most recent query for block $B_{q_i}$
before the $i$-th query.
The second step consists of
giving an upper bound that holds for each factor of the product in
the above equation.
The third step shows that the upper bound computed in the second step is too pessimistic and that,
all factors, except for $3$, in the product in the right hand side of the
above equation are $1$
when $Q$ and $Q'$ differ in exactly one position.

We now proceed to define notation and terminology used throughout
our proof.
First, we observe that the transcript of an execution only includes the
ciphertexts of the blocks and not the actual content of the blocks transferred.
Assuming IND-CPA of the underlying encryption scheme $(\Enc,\Dec)$,
it is straightforward to prove that,
for a sequence $Q=(q_1,\ldots,q_l)$ of $l$ queries,
the transcript generated by $Q$ for blocks $B_1,\ldots,B_n$
is indistinguishable from the transcript generated by the same sequence $Q$
for $n$ blocks that are $\bzero$.
For this reason, we shall not consider the ciphertexts of the blocks as part of the
transcript and consider a transcript $\calT$ for query sequence $Q$ as a sequence
$\calT=((d_1,o_1),\ldots,(d_l,o_l))$
of $l$ pairs $\calT_j=(o_j,d_j)$ of indices of the blocks that are accessed during the download phase and the overwrite phase.
We set $\calT_{[j]}=((d_1,o_1),\ldots,(d_j,o_j)).$
We define
$\ram^D_j(Q)$ and $\ram^O_j(Q)$ to be the random variables of the indices of the download and of the overwrite block
of the $j$-th query for all $j \in [l]$.
Also, for $S\subseteq [l]$ we define $\ram_S(Q)$
to be the set of random variables
$\{\ram^D_j(Q), \ram^O_j(Q)\}_{j \in S}$. Also, we set $\ram(Q):=\ram_{[n]}(Q)$.
It turns out to be convenient to extend the random variable
$\ram$ for indices $\pm\infty$ by setting
$\ram_{\pm\infty}^D(Q)$ and $\ram^O_{\pm\infty}(Q)$ to be random variables
that give probability $1$ to $\perp$.

We also let $\prev(Q,j)$ be the index of the most recent {\em previous} query
to block $B_{q_j}$ that happened before the $j$-th query of $Q$;
that is, $\prev(Q,j)=\max\{i<j:q_i=q_j\}$.
If query $q_j$ is the first query of
sequence $Q$ that asks for block $B_{q_j}$, then $\prev(Q,j)=+\infty$.
Similarly, we define $\nxt(Q,j)$ to be the index of the nearest
{\em next} query for block $B_{q_j}$; that is, 
$\nxt(Q,j) = \min\{i>j:q_i=q_j\}$.
If the $j$-th query of $Q$ is the last query of $Q$ to ask for block
$B_{q_j}$, then $\nxt(Q,j)=-\infty$.

\subsection{Step I: Reducing dependencies}
\label{sec:stepone}
The next lemmas outline the dependencies of the random variables of the
download blocks and of the overwrite blocks.
We start by showing that the overwrite block of each query is independent of
all previous history
and only depends on the current query.

\begin{lemma}
\label{lemma:overwrite}
For every query sequence $Q$ of length $l$,
for every transcript $\calT=((d_1,o_1),\ldots,(d_l,o_l))$ and
for every $j\leq l$
\begin{align*}
        \Pr[&\ram^O_j(Q)=o_j\mid\ram_{[j-1]}(Q)=\calT_{[j-1]}\wedge \ram^D_j(Q)= d_j]\\
        &= \Pr[\ram^O_j(Q)=o_j].
\end{align*}
Moreover, if $Q$ and $Q'$ are two sequences with $q_j=q'_j$, then the distributions
$\ram^O_j(Q)$ and $\ram^O_j(Q')$ coincide.
\end{lemma}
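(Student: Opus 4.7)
The plan is to show that at the $j$-th query the overwrite address $o_j$ is produced by two fresh, local random choices whose joint distribution depends only on $q_j$. Once this is pinned down, both conditional independence from the past (and from the download phase of the same query) and the distributional equality under $q_j = q'_j$ follow at once.

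First I would unpack the overwrite phase of $\oramC$ to isolate the randomness it consumes. By construction, after the download phase of the $j$-th query the client tosses an independent Bernoulli coin $X_j$ with parameter $p$ to decide whether the current version of $B_{q_j}$ enters the stash. If $X_j = 0$, the block is re-encrypted and uploaded to $A[q_j]$, so $\ram^O_j(Q) = q_j$. If $X_j = 1$, the client samples an index $R_j$ uniformly at random from $[n]\setminus\{q_j\}$ and uses $A[R_j]$ as the cover overwrite target, so $\ram^O_j(Q) = R_j$. Crucially, the pair $(X_j, R_j)$ is drawn from random bits that are not consumed anywhere else in the execution.

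Next I would argue conditional independence from the past and from the download phase of the same query. The random variable $\ram_{[j-1]}(Q)$ is determined by the setup randomness, the Bernoulli/uniform choices made in queries $1,\ldots,j-1$, and the download-phase choices through the first $j-1$ queries; the random variable $\ram^D_j(Q)$ is determined by the current stash membership of $B_{q_j}$ (a function of setup and earlier Bernoullis) together with a fresh uniform choice in $[n]$ when $B_{q_j}$ happens to lie in the stash. None of these sources of randomness overlaps with $(X_j, R_j)$, so $\ram^O_j(Q)$ is independent of the joint tuple $(\ram_{[j-1]}(Q), \ram^D_j(Q))$, which yields the displayed equality. The explicit law of $o_j$ is $\Pr[o_j = q_j] = 1 - p$ and $\Pr[o_j = k] = p/(n-1)$ for $k \ne q_j$; therefore whenever $q_j = q'_j$ the distributions of $\ram^O_j(Q)$ and $\ram^O_j(Q')$ coincide, which is the second claim.

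The main obstacle is really one of careful bookkeeping: one has to identify every random choice the pseudocode of $\oramC$ makes during the overwrite phase and verify that no earlier variable reuses those bits, and that the download phase of the same query reads only from disjoint randomness (the setup, the prior Bernoullis, and a fresh uniform bit when $B_{q_j}$ is in the stash). Once the randomness is compartmentalized this way, the lemma reduces to the standard observation that functions of disjoint collections of independent random variables are themselves independent.
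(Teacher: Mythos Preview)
Your approach is the same as the paper's: both argue that the overwrite index is a function of $q_j$ and fresh randomness drawn during the $j$-th overwrite phase, hence independent of all earlier choices and of the download phase. The paper's proof is a one-sentence version of your more detailed bookkeeping.

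One small factual slip to fix: in the actual $\oramC$ pseudocode, when $X_j = 1$ the cover index $R_j$ is sampled uniformly from all of $[n]$, not from $[n]\setminus\{q_j\}$. Consequently the explicit law is $\Pr[o_j = q_j] = (1-p) + p/n$ and $\Pr[o_j = k] = p/n$ for $k \ne q_j$ (this is exactly what the paper computes later in the proof of Lemma~\ref{lemma:overwrite_prob}). This does not affect your independence argument or the validity of the lemma, since the point is only that the distribution is determined by $q_j$ and fresh coins; but the numbers you wrote down should be corrected.
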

\begin{proof}
The lemma follows by observing that the distribution of the $j$-th overwrite block
depends only on whether $B_{q_j}$ is added to the block stash during the
$j$-th overwrite phase which, in turn, depends only on $q_j$ and the random value of $r$
in the overwrite phase of the $j$-th query.
\end{proof}

\begin{lemma}
\label{lemma:download}
For every query sequence $Q$ of length $l$,
for every transcript $\calT=((d_1,o_1),\ldots,(d_l,o_l))$ and
for every $j\leq l$,
\begin{align*}
  \Pr&[\ram^D_j(Q)=d_j\mid\ram_{[j-1]}(Q)= \calT_{[j-1]}]=\\
&\Pr[\ram^D_j(Q)=d_j\mid \ram^O_{\prev(Q,j)}=o_{\prev(Q,j)}].
\end{align*}
Moreover, if $Q$ and $Q'$ are two sequences with $q_j=q'_j$ and $\prev(Q,j)=\prev(Q',j)$ then for all $d\in[n]$ and
for all $o\in[n]\cup\{\perp\}$ such that $\Pr[\ram^O_{\prev(Q,j)}=o]>0$,
\begin{align*}
	\Pr[&\ram^D_j(Q)=d\mid\ram^O_{\prev(Q,j)}=o]\\
	&=\Pr[\ram^D_j(Q')=d\mid\ram^O_{\prev(Q',j)}=o].
\end{align*}
\end{lemma}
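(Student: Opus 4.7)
The intuition is that the download index at step $j$ is a function of only $q_j$, the stash-membership bit $s_j$ of block $B_{q_j}$ at the start of query $j$, and fresh local randomness. All dependence of $\ram^D_j(Q)$ on the prior transcript $\calT_{[j-1]}$ must therefore flow through $s_j$. The plan is to show that $s_j$ is set by a single coin flip, and that conditioning on $o_{\prev(Q,j)}$ extracts exactly the information about that coin which is present anywhere in $\calT_{[j-1]}$.

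First I would unpack the algorithm: in the download phase of query $j$, if $s_j = 0$ the client reads $A[q_j]$, and if $s_j = 1$ it reads $A[k]$ for an independent uniform $k \in [n]$. Hence, conditional on $(q_j, s_j)$, the random variable $\ram^D_j(Q)$ uses only fresh local randomness and is independent of $\calT_{[j-1]}$. It therefore suffices to show
\begin{align*}
\Pr\bigl[s_j = s \mid \ram_{[j-1]}(Q) = \calT_{[j-1]}\bigr] = \Pr\bigl[s_j = s \mid \ram^O_{\prev(Q,j)} = o_{\prev(Q,j)}\bigr].
\end{align*}

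Next I would identify the unique piece of randomness that sets $s_j$. If $\prev(Q,j) = +\infty$, then $s_j$ is the setup Bernoulli for $B_{q_j}$; since setup Bernoullis for distinct blocks are independent and no query-time coin touches this bit, $s_j$ is independent of $\calT_{[j-1]}$. Combined with $\Pr[\ram^O_{+\infty} = \perp] = 1$, this yields the claim in this case. If instead $\prev(Q,j) = \tau < \infty$, then $s_j$ equals the outcome of the overwrite-phase Bernoulli at step $\tau$. This coin is independent of (i)~the download coin at step $\tau$, which reads the stash state \emph{before} step $\tau$ and is driven by earlier randomness, and (ii)~all randomness at steps $i \in [j-1] \setminus \{\tau\}$, whose queries touch blocks distinct from $B_{q_j}$ because $\tau$ is the most recent prior occurrence of $q_j$ in $Q$ (and an overwrite for a different block never changes the stash state of $B_{q_j}$). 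Thus every entry of $\calT_{[j-1]}$ apart from $o_\tau$ is a function of randomness independent of $s_j$, whereas $o_\tau$ is a deterministic function of the $s_j$-coin together with an independent fresh random-index draw. Conditioning on $o_\tau$ therefore captures all information about $s_j$ contained in $\calT_{[j-1]}$, yielding the equality.

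For the ``moreover'' clause, when $q_j = q'_j$ and $\prev(Q,j) = \prev(Q',j) = \tau$, the joint distribution of $(s_j, o_\tau)$ depends only on the algorithm parameters (the stash probability $p$ and the random-index distribution) and on the shared query $q_j$; hence $\Pr[s_j = s \mid o_\tau = o]$ agrees for $Q$ and $Q'$. Combining with the fact that $\Pr[\ram^D_j(Q) = d \mid s_j = s, q_j]$ also depends only on $q_j$, the two conditional download distributions coincide. The main obstacle is the bookkeeping in the middle paragraph: one must carefully verify that no coin driving any other entry of $\calT_{[j-1]}$ is correlated with the overwrite coin at step $\tau$, which amounts to walking through setup and each intermediate phase and identifying the disjoint sets of coins they consume.
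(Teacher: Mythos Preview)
Your proposal is correct and follows the same approach as the paper: both reduce the download distribution at step $j$ to the stash-membership bit of $B_{q_j}$, and then argue by the two cases $\prev(Q,j)=+\infty$ (setup coin, independent of history) and $\prev(Q,j)=\tau\in[j-1]$ (overwrite coin at step $\tau$, reflected only in $o_\tau$). Your write-up is considerably more explicit than the paper's terse two-paragraph argument; one small wording slip is that steps $i<\tau$ need not query blocks distinct from $B_{q_j}$, but your conclusion still holds there for the simpler reason that those transcript entries are functions of randomness drawn before the step-$\tau$ overwrite coin.
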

\begin{proof}
The $j$-th download block of sequence $Q$
depends on whether $B_{q_j}$ is in the block stash at the start of the
$j$-th download phase.
We distinguish two cases.

If $\prev(Q,j)=+\infty$, then block $B_{q_j}$ has not been queried in the first
$j-1$ queries and its probability of being found in the stash at the start of
$j$-th download phase is equal
to the probability of being placed in the stash by $\ram.\!\Init$ which,
obviously, is independent from the previous history.

Suppose instead that $\prev(Q,j)\in[j-1]$. Then the probability that $B_{q_j}$
is in the stash at the beginning of the $j$-th query depends only on
the random variable, $\ram^O_{\prev(Q,j)}(Q)$, of the overwrite block of query
$\prev(Q,j)$.
\end{proof}

The first part of Lemma~\ref{lemma:overwrite} and the first part of Lemma~\ref{lemma:download} imply
{\small
\begin{align*}
&\Pr[\ram_j(Q)=(d_j,o_j)\mid\ram_{[j-1]}(Q)=\calT_{[j-1]}]=\\
     &\Pr[\ram_j^D(Q)=d_j     \mid\ram_{[j-1]}(Q)=\calT_{[j-1]}] \cdot\\
     &\quad\Pr[\ram_j^O(Q)=o_j
  \mid\ram_{[j-1]}(Q)=\calT_{[j-1]}\wedge\ram_j^D(Q)=d_j]=\\
  &\Pr[\ram_j^D(Q)=d_j     \mid\ram_{\prev(Q,j)}(Q)=o_{\prev(Q,j)}]\cdot
  \Pr[\ram_j^O(Q)=o_j].\\
\end{align*}
}
Thus, we can write
{\small
\begin{align*}
&\frac{\Pr[\ram(Q')=\calT]}{\Pr[\ram(Q)=\calT]}=
        \prod_{j=1}^l
            \frac{\Pr[\ram_j^O(Q )=o_j]}%
              {\Pr[\ram_j^O(Q')=o_j]}\\
	&
	\qquad\times\prod_{j=1}^l
  \frac{\Pr[\ram_j^D(Q')=d_j\!\mid\!\ram_{\prev(Q',j)}(Q )=o_{\prev(Q',j)}]}%
       {\Pr[\ram_j^D(Q )=d_j\!\mid\!\ram_{\prev(Q ,j)}(Q')=o_{\prev(Q ,j)}]}
\end{align*}}

\subsection{Step II: Upper bounding factors}
\label{sec:steptwo}
In the next two lemmata,
we give an upper bound on the contribution of each $j\in[l]$ to the product in
the equation above.

\begin{lemma}
\label{lemma:download_prob}
Let $Q$ and $Q'$ be two query sequences of length $l$.
For every transcript $\calT=((d_1,o_1),\ldots,(d_l,o_l))$ and every $j\in[l]$
$$
\frac{\Pr[\ram^D_j(Q')=d_j\mid\ram^O_{\prev(Q',j)}(Q')=o_{\prev(Q',j)}]}%
	{\Pr[\ram^D_j(Q) =d_j\mid\ram^O_{\prev(Q ,j)}(Q )=o_{\prev(Q ,j)}]} \le \frac{n^2}{p}.
$$
\end{lemma}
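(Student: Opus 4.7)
The plan is to bound the numerator and denominator of the ratio separately: the numerator is a probability and hence trivially at most $1$, while I will show that the denominator is at least $p/n^2$. Together these yield the desired bound $n^2/p$. The key structural observation is that the conditional distribution of $\ram^D_j(Q)$ given $\ram^O_{\prev(Q,j)}(Q) = o_{\prev(Q,j)}$ is determined by a single parameter, namely the conditional probability $\pi$ that block $B_{q_j}$ resides in the client stash at the start of the $j$-th download phase. Indeed, if $B_{q_j}$ is in the stash, then $\ram^D_j(Q)$ is uniform over $[n]$; otherwise $\ram^D_j(Q) = q_j$ deterministically. So the denominator equals $\pi/n$ when $d_j \ne q_j$ and $\pi/n + (1-\pi)$ when $d_j = q_j$.

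Next, I would observe that between queries $\prev(Q,j)$ and $j$ no other query can alter the stash membership of $B_{q_j}$, since no intermediate query targets $B_{q_j}$ and operations on other blocks only touch the stash slot of their own queried block. This reduces the computation of $\pi$ to analyzing a single overwrite step (or the initialization, in the edge case). I then proceed by cases. (i) If $\prev(Q,j) = +\infty$, the conditioning is vacuous and $\pi = p$ directly from initialization. (ii) If $\prev(Q,j) = i$ and $o_i \ne q_i$, then the overwrite block must have been drawn uniformly at random, which by the overwrite phase definition happens only when $B_{q_i}$ was placed into the stash; hence $\pi = 1$. (iii) If $\prev(Q,j) = i$ and $o_i = q_i$, both outcomes of the $i$-th overwrite are consistent with the observation, and Bayes' rule yields
\[
\pi \;=\; \frac{p \cdot (1/n)}{p \cdot (1/n) + (1-p)} \;=\; \frac{p}{p + n(1-p)}.
\]

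Finally, I would minimize the case-wise expression for the denominator across the three cases and the two subcases $d_j = q_j$ and $d_j \ne q_j$. The minimum is attained in case (iii) with $d_j \ne q_j$, where the conditional probability equals $\pi/n = p/(n(p+n(1-p)))$; using the elementary inequality $p + n(1-p) \le n$, valid for $p \in [0,1]$ and $n \ge 1$, this is at least $p/n^2$. All other cases give at least $\min(1/n, p/n) \ge p/n^2$. Combining with the trivial numerator bound of $1$ yields the claimed ratio bound $n^2/p$. I expect the main subtlety to be the Bayesian inversion in case (iii): one must carefully enumerate the randomness sources in the overwrite phase (the stash coin flip and, conditionally, the uniform choice of the block to re-encrypt) to justify the conditional probabilities and confirm no subcase produces a denominator smaller than $p/n^2$.
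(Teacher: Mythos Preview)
Your proposal is correct and follows essentially the same strategy as the paper: bound the numerator by $1$ and show the denominator is at least $p/n^2$ via the same three-way case split on $\prev(Q,j)$ and $o_{\prev(Q,j)}$. The only technical difference is that the paper lower-bounds the conditional probability in the denominator by the joint probability $\Pr[\ram^D_j(Q)=d_j \wedge \ram^O_{\prev(Q,j)}(Q)=o_{\prev(Q,j)}]$ and computes that directly, whereas you compute the conditional itself by isolating the stash-membership probability $\pi$ and applying Bayes' rule; both routes land on the same worst case $p/n^2$.
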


\begin{lemma}
\label{lemma:overwrite_prob}
Let $Q$ and $Q'$ be two query sequences of length $l$.
For every transcript $\calT=((d_1,o_1),\ldots,(d_l,o_l))$ and every $j\in[l]$
	$$\frac{\Pr[\ram^O_j(Q')=o_j]}{\Pr[\ram^O_j(Q)=o_j]} \le \frac{n}{p}.$$
\end{lemma}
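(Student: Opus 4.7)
The plan is to reduce the ratio to one involving marginal probabilities and then read off those marginals directly from the definition of the overwrite phase.

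First, by the second part of Lemma~\ref{lemma:overwrite}, the marginal distribution of $\ram^O_j(Q)$ depends only on $q_j$, so the ratio in question only depends on $q_j$, $q'_j$, and $o_j$. It therefore suffices to understand $\Pr[\ram^O_j=o]$ as a function of the current query index and the candidate output $o\in[n]$. Unfolding the overwrite phase of $\oramC$: independently at each query, block $B_{q_j}$ enters the stash with probability $p$. In the ``not in stash'' branch (probability $1-p$) the client writes back to address $q_j$, so the overwrite index is deterministically $q_j$; in the ``in stash'' branch (probability $p$) the client picks a fresh random index and re-uploads a fresh encryption there, with every index in $[n]$ (or $[n]\setminus\{q_j\}$ if ``another record'' is read strictly) hit uniformly.

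Combining the two branches, for every $o\in[n]$ one obtains
\[
\Pr[\ram^O_j(Q)=o]\ge\frac{p}{n},\qquad \Pr[\ram^O_j(Q)=o]\le 1,
\]
and substituting these bounds into the numerator and denominator of the ratio immediately yields the desired $n/p$. No step is particularly delicate; the crux of the argument is simply that the ``in stash'' branch contributes a uniform component placing probability at least $p/n$ on every index in $[n]$, which is what prevents the denominator from collapsing even when $q_j\ne q'_j$ and $o_j\ne q_j$. The main obstacle for the overall privacy proof lies not in this lemma but in the subsequent step (described as Step III in the roadmap), where this rather crude per-query bound $n/p$ must be shown to be essentially trivial, i.e., $1$, at all but a constant number of indices $j$ when $Q$ and $Q'$ differ in a single position.
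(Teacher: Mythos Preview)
Your proposal is correct and follows essentially the same approach as the paper: upper bound the numerator by $1$, compute the overwrite marginal from the two branches of the overwrite phase to get $\Pr[\ram^O_j(Q)=q_j]=(1-p)+p/n$ and $\Pr[\ram^O_j(Q)=o]=p/n$ for $o\ne q_j$, and conclude the denominator is at least $p/n$. Your observation that the ``in stash'' branch supplies a uniform $p/n$ floor on every index is exactly the mechanism the paper relies on.
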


Both of these lemmata consider the various cases that can occur.
As the proofs are case analysis that do not provide better intuition to
the problem,
we postpone them to Section~\ref{sec:ram_proofs_app}.

\subsection{Step III: Identifying the many good cases}
The bounds given by Lemma~\ref{lemma:download_prob}
and Lemma~\ref{lemma:overwrite_prob}
would give an $n^{O(l)}$ upper bound on the ratio in
Equation~\ref{eqn:ratio}. This is a very weak bound as it depends on the length $l$ of the sequences.
In this section, we tighten the upper bound to $n^{O(1)}$ which is instrumental to prove that $\ram$ is private with $\epsilon = O(\log n)$.
Specifically, the
next lemma gives sufficient conditions under which the ratio is actually $1$
and then we show that,
if the two sequences only differ in one position, then there are only three values of $j$ for which
the conditions are not satisfied and for those position we use the upper bound of the previous section.

The following lemma follows directly from the second parts of Lemma~\ref{lemma:overwrite} and~\ref{lemma:download}.
\begin{lemma}
\label{lemma:same}
For any two sequences $Q$ and $Q'$ of the same length,
every transcript $\calT=((d_1,o_1),\ldots,(d_l,o_l))$
and every $j \in [l]$ with $\prev(Q,j)=\prev(Q',j)$ and $q_j=q'_j$,
	{\small
\begin{align*}
&\Pr[\ram^O_j(Q)=o_j]\\
&\qquad\cdot\Pr[\ram^D_j(Q)=d_j\mid\ram^O_{\prev(Q,j)}(Q)=o_{\prev(Q,j)}]=\\
&\Pr[\ram^O_j(Q')=o_j]\\
&\qquad\Pr[\ram^D_j(Q')=d_j\mid\ram^O_{\prev(Q',j)}(Q')=o_{\prev(Q',j)}].
\end{align*}}
\end{lemma}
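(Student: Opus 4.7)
The plan is to derive Lemma~\ref{lemma:same} as a direct corollary of the second parts of Lemma~\ref{lemma:overwrite} and Lemma~\ref{lemma:download}, which is exactly what the statement claims. The hypotheses $q_j = q'_j$ and $\prev(Q,j) = \prev(Q',j)$ are precisely what is needed to invoke each of those lemmata, so the proof is structurally a two-line matching exercise plus attention to a well-definedness caveat.

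First I would apply the second part of Lemma~\ref{lemma:overwrite}: since $q_j = q'_j$, the random variables $\ram^O_j(Q)$ and $\ram^O_j(Q')$ are identically distributed, and hence
\begin{align*}
\Pr[\ram^O_j(Q) = o_j] = \Pr[\ram^O_j(Q') = o_j].
\end{align*}
Next I would apply the second part of Lemma~\ref{lemma:download}: since $q_j = q'_j$ and $\prev(Q,j) = \prev(Q',j)$, for every value $o$ of the relevant overwrite variable with positive probability we have
\begin{align*}
\Pr[\ram^D_j(Q) = d_j \mid \ram^O_{\prev(Q,j)}(Q) = o] = \Pr[\ram^D_j(Q') = d_j \mid \ram^O_{\prev(Q',j)}(Q') = o].
\end{align*}
Instantiating $o = o_{\prev(Q,j)} = o_{\prev(Q',j)}$ and multiplying the two equalities above yields the lemma.

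The only minor subtlety, which I would address explicitly, is the well-definedness of the conditioning event $\ram^O_{\prev(Q,j)}(Q) = o_{\prev(Q,j)}$. If this event has zero probability then both sides of the claimed identity are zero under the standard convention (or, equivalently, the equation is vacuous), so the identity holds trivially; otherwise the hypothesis of Lemma~\ref{lemma:download}'s second part is satisfied and the argument above goes through unchanged. There is no real obstacle here: the work for this lemma was already done in establishing the second halves of Lemmata~\ref{lemma:overwrite} and~\ref{lemma:download}, and the present statement is the packaged form that will be convenient when canceling terms in the product in Equation~\ref{eqn:ratio} at positions $j$ where $Q$ and $Q'$ agree in both the current query and the previous-query pointer.
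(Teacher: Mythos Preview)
Your proposal is correct and follows exactly the approach the paper takes: the paper states only that the lemma ``follows directly from the second parts of Lemma~\ref{lemma:overwrite} and~\ref{lemma:download},'' and you have spelled out precisely that derivation. Your added remark about the zero-probability conditioning event is a reasonable clarification that the paper omits.
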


The lemma above says that the distributions of the transcripts associated with
two query sequences $Q$ and $Q'$ may differ only at indices $j$ for which $\prev(Q,j)\ne \prev(Q',j)$ or
$q_j\ne q'_j$.  The next lemma identifies the indices $j$ for which this happens when $Q$ and $Q'$ differ
in exactly one position.

\begin{lemma}
\label{lemma:different}
Let $Q$ and $Q'$ be two query sequences of length $l$ differing only at position $k \in [l]$.
If $j \notin \{k,\nxt(Q,k),\allowbreak\nxt(Q',k)\}$, then $\prev(Q,j)=\prev(Q,j')$.
\end{lemma}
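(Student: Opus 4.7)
My plan is a straightforward case analysis based on the position of $j$ relative to $k$ and on whether the (single) differing index $k$ could serve as the previous occurrence of $q_j$ in either sequence. The key observation to exploit is that since $Q$ and $Q'$ agree everywhere except at position $k$, for every $i \ne k$ we have $q_i = q'_i$; in particular, since $j \ne k$, the target block at position $j$ satisfies $q_j = q'_j$.

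\textbf{Step 1 (easy side, $j < k$).} When $j < k$, the set $\{i < j : q_i = q_j\}$ involves only indices strictly below $k$, where $Q$ and $Q'$ coincide. Hence $\prev(Q,j) = \prev(Q',j)$ immediately.

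\textbf{Step 2 (the interesting side, $j > k$).} Define the common set $T := \{i < j : i \ne k,\ q_i = q_j\} = \{i < j : i \ne k,\ q'_i = q'_j\}$, which is identical in both sequences. Then $\prev(Q,j) = \max\bigl(T \cup \{k : q_k = q_j\}\bigr)$ and similarly $\prev(Q',j) = \max\bigl(T \cup \{k : q'_k = q'_j\}\bigr)$. I would then split into four subcases according to whether $q_k = q_j$ and whether $q'_k = q'_j$. The case where both equalities hold is vacuous, because it would force $q_k = q_j = q'_j = q'_k$, contradicting $q_k \ne q'_k$ at the differing position. The case where neither equality holds gives $\prev(Q,j) = \max T = \prev(Q',j)$ trivially. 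The two remaining mixed subcases are symmetric, so it suffices to analyze one: suppose $q_k = q_j$ but $q'_k \ne q'_j$. Then $\prev(Q',j) = \max T$, while $\prev(Q,j)$ agrees with $\max T$ unless $k > \max T$ (so $\prev(Q,j) = k$). But $\prev(Q,j) = k$ means there is no occurrence of $B_{q_j} = B_{q_k}$ in $Q$ strictly between positions $k$ and $j$, i.e.\ $j = \nxt(Q,k)$; by hypothesis $j \ne \nxt(Q,k)$, so this is impossible. The symmetric subcase rules out $j = \nxt(Q',k)$ analogously.

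\textbf{Main obstacle.} There is no deep obstacle here; this is essentially bookkeeping about where the single differing coordinate $k$ can affect the most-recent-previous-match function. The only subtlety is to recognize that the ``both match'' subcase is vacuous due to $q_k \ne q'_k$, which is what pins the anomalous indices down to exactly $k$, $\nxt(Q,k)$, and $\nxt(Q',k)$. Once this is seen, the argument is a short case split, and together with Lemma~\ref{lemma:same} it delivers the desired reduction to at most three ``bad'' factors in the product of Step~I, enabling the $n^{O(1)}$ bound needed for $\epsilon = O(\log n)$.
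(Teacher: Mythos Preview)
Your proof is correct and follows essentially the same case analysis as the paper's own proof: both handle $j<k$ trivially by agreement of the prefixes, and for $j>k$ both identify that the only way $\prev(Q,j)$ and $\prev(Q',j)$ can differ is when position $k$ itself serves as the previous occurrence in exactly one of the two sequences, which forces $j=\nxt(Q,k)$ or $j=\nxt(Q',k)$. Your presentation via the common set $T$ and the four-way split on $(q_k=q_j,\,q'_k=q'_j)$ is a bit more explicit than the paper's, but the underlying argument is the same.
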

\begin{proof}
For $j<k$, we have $(q_1,\ldots,q_j)=(q'_1,\ldots,q'_j)$ and thus $\prev(Q,j)=\prev(Q',j)$.
For $j>k$ such that $q_j\ne q_k,q'_k$ we have $q_j=q'_j$ and $\prev(Q,j)=\prev(Q',j)$.
Next consider the indices $j_1<\ldots<j_l$ such that $q_k=q_{j_1}=\ldots=q_{j_l}$. Clearly,
if $l>0$ then $j_1=\nxt(Q,k)$ and, for $i=2,\ldots,l$, we have $q_{j_i}=q'_{j_i}$ and $\prev(Q,j_i)=\prev(Q',j_i)$.
A similar argument applies for the indices $j>k$ such $q'_j=q'_k$ thus completing the proof of the theorem.
\end{proof}

\subsection{Wrapping up the proof}
\begin{proof}[Proof of Theorem~\ref{thm:up_ram}]
Bounds on bandwidth and server storage are obvious and the one on
client storage follows from Lemma~\ref{lemma:oram_store}.

Consider sequences $Q$ and $Q'$ of length $l$
that differ only in position $k \in [l]$.
Let $\calT=((d_1,o_1), \ldots, (d_l,o_l))$ be transcript.
By Lemma~\ref{lemma:same} and~\ref{lemma:different},
we obtain that the ratio
	$\frac{\Pr[\ram(Q')=\calT]}{\Pr[\ram(Q) =\calT]}$
is
{
\small
$$
\prod_{j\in S}\frac
     {\Pr[\ram_j(Q')=\calT_j\!\mid\!\ram_{_{[j-1]}}(Q')=\calT_{_{[j-1]}}]}
     {\Pr[\ram_j(Q )=\calT_j\!\mid\!\ram_{_{[j-1]}}(Q )=\calT_{_{[j-1]}}]}
$$
}
where $S = \{k, \nxt(Q, k), \nxt(Q', k)\}$.
By Lemma~\ref{lemma:download_prob} and~\ref{lemma:overwrite_prob},
$$
	\frac{\Pr[\ram(Q')=\calT]}{\Pr[\ram(Q)=\calT]} = \left(\frac{n}{p}\right)^{O(1)}.
$$
Since, the above holds for any single transcript, we get that the ratio holds
for any set of transcripts. This implies that $\epsilon = O(\log n)$.
\end{proof}

\section{DP-KVS Construction}
\label{sec:kvs}

In this section, we present $\dpkvs$, our $\kvs$ construction
with privacy budget $\epsilon = O(\log n)$ and small overhead.
Recall that a $\kvs$ is an extension of $\ram$
where each of the $n$ blocks is identified by a unique key
taken from a possibly large universe of keys $U$. In contrast,
a block in $\ram$ is uniquely identified by an integer in $[n]$.
One could use the $\dpram$ construction from Section~\ref{sec:ram},
which results in server storage on the order of $|U| \gg n$.
For efficiency, we would like a $\dpkvs$ scheme that stores $O(n)$ blocks
on the server.

Our approach to constructing a $\dpkvs$ scheme consists of two steps.
First, we show that we can construct a $\dpkvs$ scheme using
a $\dpram$ scheme and a {\em mapping scheme} which associates keys in
the universe $U$ to subsets of server storage. Next,
we present an efficient mapping scheme by constructing a non-trivial
{\em oblivious} variant of two-choice hashing~\cite{Mitzenmacher01}
that uses $O(n)$ server storage and which may be of independent interest.
Our oblivious two-choice hashing variant hides the
number of real items that are stored in each bin at any point in time.
Finally, combined with our $\dpram$ scheme of Section~\ref{sec:ram},
we present a $\eddpkvs$ with $\epsilon = O(\log n)$ and $\delta = \negl(n)$
using only $O(\log\log n)$ overhead. While non-constant, this is
exponentially better than any previous oblivious $\kvs$ scheme built from
ORAMs. Furthermore, for all practical sizes of $n$, $O(\log\log n)$ is
very small.

\subsection{Composing Mapping Schemes and DP-RAM}
\label{sec:kvs_map_ram}

In this section, we present a generic reduction of $\dpkvs$ to
a mapping scheme and our $\dpram$ from Section~\ref{sec:ram}.
First, we define mapping schemes. Afterwards, we present the reduction.

A mapping scheme is defined as the tuple $(\Pi, \calS)$ where $\Pi$
is the {\em mapping function} and $\calS$ is the {\em storing algorithm}.
To store $n$ items each uniquely identified by an key from the
universe $U$, a mapping scheme arranges the server storage
into $b(n)$ buckets each consisting of at most $s(n)$ blocks.
Each bucket is uniquely identified by an index from $[b(n)]$.
We note that buckets are not necessarily disjoint and the total number
of blocks may be much smaller than $b(n) \cdot s(n)$.
In addition, the mapping scheme assumes that the client will hold
a {\em mapping stash} which will contain at most $c(n)$ blocks except
with probability $\negl(n)$.
For each item $u \in U$, the mapping function maps $u$ to a subset
of at most $s(n)$ buckets defined by $\Pi(u) \subseteq [b(n)]$.
For convenience, we denote $k(n) := \max_{u \in U}|\Pi(u)|$.
When inserting a new item with identifier $u$,
the storing algorithm $\calS$ determines whether $u$ is placed
into a bucket of $\Pi(u)$ or the mapping stash according to the
sizes of buckets in $\Pi(u)$.
We now show how to use mapping schemes to construct a $\dpkvs$.

Our $\dpkvs$ scheme works as follows. We construct the server
storage into $b(n)$ buckets as described by the mapping scheme.
We build a $\dpram$ to be able to query and update the $b(n)$ buckets.
In Appendix~\ref{sec:general}, we show that our $\dpram$ construction
from Section~\ref{sec:ram}
remains secure and efficient when querying possibly overlapping buckets
with minor modifications.
When querying our $\dpkvs$ for a key $u \in U$,
we perform $k(n)$
$\dpram$ queries to retrieve the $s(n)$ blocks from
each bucket in $\Pi(u)$.
If $|\Pi(u)|<k(n)$,
we pick random buckets to pad $\Pi(u)$ to size $k(n)$.
Now, we are guaranteed that if $u$ exists in $\dpkvs$, it appears in
a bucket of $\Pi(u)$ or in the mapping stash and can be thus returned.
To update an existing key, we can simply update either the bucket or the mapping
stash that contains the block associated to key $u$.
For insertion, we can execute the storing algorithm $\calS$ as all the contents of
buckets $\Pi(u)$ and the mapping stash are available to the client.
$\calS$ determines the insertion location for the block associated to key $u$.
Finally, we execute $k(n)$ $\dpram$ updates to all buckets in $\Pi(u)$.
Only the bucket containing the block associated with key $u$ is updated. The
other buckets will perform fake updates where the contents remain unchanged.
For read operations, none of the contents of buckets in $\Pi(u)$ will be changed.
We prove the following theorem about our $\dpkvs$ construction.

\begin{theorem}
\label{thm:kvs_map_ram}
For $n$ blocks,
the above $\kvs$ scheme is an $\edpkvs$ with $\epsilon=O(k(n) \cdot \log n)$
	that returns at most $O(k(n) \cdot s(n))$ blocks.
For any function $\Phi(n) = \omega(\log n)$,
	the client stores $O(s(n) \cdot \Phi(n)+c(n))$ blocks
of storage except with probability $\negl(n)$.
\end{theorem}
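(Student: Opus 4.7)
The plan is to verify three separate properties of the construction: the bandwidth bound, the client storage bound, and the privacy budget. The first two follow directly from the reduction and Theorem~\ref{thm:up_ram}; the third requires a composition/group-privacy argument across the $k(n)$ underlying $\dpram$ calls.

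First I would dispatch the easy bounds. Every $\kvs$ operation is implemented by exactly $k(n)$ $\dpram$ operations, one for each probed bucket (with $\Pi(u)$ padded to size $k(n)$ by random buckets when $|\Pi(u)| < k(n)$). Each $\dpram$ operation touches $O(1)$ buckets by Theorem~\ref{thm:up_ram}, and each bucket is $O(s(n))$ blocks, so the total bandwidth per $\kvs$ query is $O(k(n) \cdot s(n))$. For client storage, the $\dpram$ stash holds at most $\Phi(n)$ buckets except with probability $\negl(n)$ (Theorem~\ref{thm:up_ram} applied to the $b(n) = O(n)$ buckets treated as $\dpram$ records), contributing $O(s(n) \cdot \Phi(n))$ blocks; the mapping stash contributes $c(n)$ more, giving $O(s(n) \cdot \Phi(n) + c(n))$. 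The mapping stash is client-side only, so it contributes nothing to the adversary's view.

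For the privacy analysis, fix any two adjacent $\kvs$ query sequences $Q$ and $Q'$ that differ at a single position $i$. Let $\widetilde{Q}$ and $\widetilde{Q}'$ denote the underlying $\dpram$ query sequences induced by running $\dpkvs$ on $Q$ and $Q'$. Because each $\kvs$ query deterministically expands into exactly $k(n)$ $\dpram$ queries, $\widetilde{Q}$ and $\widetilde{Q}'$ coincide at every position outside the block of $k(n)$ positions corresponding to query $i$, so $d(\widetilde{Q},\widetilde{Q}') \le k(n)$. The adversary's view of $\dpkvs$ equals the $\dpram$ transcript, so by Theorem~\ref{thm:up_ram} together with group privacy on sequences of Hamming distance $k(n)$,
\begin{align*}
\Pr[\ram(\widetilde{Q}) \in S] \;\le\; e^{k(n) \cdot O(\log n)} \cdot \Pr[\ram(\widetilde{Q}') \in S]
\end{align*}
for every event $S$, yielding privacy budget $\epsilon = O(k(n) \cdot \log n)$.

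The main subtlety, and the part I expect to be most technically involved, is handling the padding: when $|\Pi(u)|$ depends on the queried key, the distributions over the $k(n)$ bucket indices probed at position $i$ under $Q$ and $Q'$ are different, so $\widetilde{Q}$ and $\widetilde{Q}'$ are random rather than fixed. My plan is to treat the padding buckets as private client randomness, condition on any fixed realization of that randomness, apply the group-privacy bound above to the now-fixed $\dpram$ sequences (still at Hamming distance at most $k(n)$), and then average the inequality back over the padding distribution. A parallel issue arises for insertions: the storing algorithm $\calS$ sends the real new block to exactly one of the $k(n)$ buckets (or to the mapping stash) while the other $\dpram$ updates are fake overwrites. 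Because the $\dpram$ privacy of Theorem~\ref{thm:up_ram} already hides whether a given query is a read, a real overwrite, or a fake overwrite, the choice made by $\calS$ is absorbed into the per-$\dpram$-query budget $O(\log n)$ and no composition beyond the $k(n)$ factor is incurred.
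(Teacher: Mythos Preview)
Your proposal is correct and matches the paper's approach: bandwidth and client storage are read off from the reduction and Theorem~\ref{thm:up_ram}, and privacy follows because one $\kvs$ query expands into a bounded number of $\dpram$ queries, so adjacent $\kvs$ sequences yield $\dpram$ sequences of bounded Hamming distance, to which group privacy (the paper phrases this as ``the composition theorem'') applies. One minor correction: the construction actually issues $2k(n)$ $\dpram$ calls per $\kvs$ query---$k(n)$ reads followed by $k(n)$ updates---rather than $k(n)$, but this is absorbed by the $O(\cdot)$; the padding and storing-algorithm subtleties you flag are real and your conditioning-then-averaging treatment is sound, whereas the paper glosses over them by deferring to an appendix that generalizes the $\dpram$ analysis to bucket-valued queries.
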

\begin{proof}
The bounds on bandwidth and server storage follow from
the mapping scheme properties. The client storage bound follows
from Lemma~\ref{lemma:oram_store} and the mapping scheme properties.
Each query results in at most $2\cdot k(n)$ queries over the repertoire $\Sigma$ of the $n$ buckets.
By the composition theorem and the discussion in
Appendix~\ref{sec:general},
we obtain that $\epsilon = O(k(n) \cdot \log n)$.
\end{proof}

\subsection{Oblivious Two-Choice Hashing}

Before presenting our new mapping scheme, we revisit the
two-choice hashing scheme~\cite{Mitzenmacher01}
(see Section~\ref{sec:two_choices} for more details).
This scheme considers $n$ buckets to store up to $n$ keys.
The mapping function $\Pi: U \rightarrow [n]$ sets $\Pi(u)$
as $k(n):=2$ independently and uniformly at random chosen buckets.
Typically, $\Pi$ is succintly represented using two keys,
$\key_1, \key_2$, of a pseudorandom function $F$ and by
$\Pi(u):=\{F(\key_1, u), F(\key_2, u)\}$.
The storing algorithm $\calS$ for $u \in U$, checks which of the
two buckets in $\Pi(u)$ is less loaded and places $u$ into the
less loaded bucket.
There are several different proofs that show that the largest bucket
will contain at most $s(n):=O(\log\log n)$ items except with probability $\negl(n)$.

Unfortunately, we are unable to use two-choice hashing directly
into our $\dpkvs$ scheme without incurring into a server storage blockup.
The $\dpkvs$ scheme from Section~\ref{sec:kvs_map_ram} requires
that all buckets are the same size for privacy.
The naive approach is to simply increase all buckets to the worst
case size which results in $O(n\log\log n)$ server storage.
Instead, we now present a variant of two-choice hashing
which will only use $O(n)$ server storage by arranging buckets
to share memory.

Our bucket arrangement is best described as $\Theta(n/\log n)$
identical binary trees, each with $\Theta(\log n)$ leaf nodes
and $\Theta(\log\log n)$ depth.
Leaf nodes are denoted as {\em height} 0 and the height increases
going towards the root.
This results in a total of $\Theta(n)$ nodes over all binary trees.
Each node in the tree will be able to store up to $t=\Theta(1)$ blocks.
Furthermore, we pick the binary trees such that there are exactly
$n$ leaf nodes overall.
Finally, there is a single root node that has $\Theta(n/\log n)$ children
corresponding to the roots of the $\Theta(n/\log n)$ binary tree roots.
We denote this node as the {\em super root}.
Unlike all other nodes, the super root is stored on the client.
We shall show that the probability that the super root holds more then $\Phi(n)$ blocks for
any $\Phi(n)=\omega(\log n)$ is negligible in $n$.

Each of the $n$ buckets is uniquely associated with a leaf node.
The memory locations of a bucket consist of all the blocks stored in the nodes on the unique path between
the leaf node and the super root. Therefore, a bucket consists of
$\Theta(\log\log n)$ server memory locations in addition to the memory locations in the super root.
We now describe our new storing algorithm, $\calS$, given this bucket arrangement.
When inserting $u$, $\calS$ places $u$ into the node
with minimal height (that is, closest to the leaf nodes)
in either of the buckets in $\Pi(u)$ with empty space. Note that $u$ might end up being stored in the super root.
If all the nodes of both buckets of $\Pi(u)$ are filled then the mapping scheme fails to store $u$.
We show that when inserting any set of at most $n$ keys,
if we limit the capacity of the super root to $\Phi(n)$, for some $\Phi(n)=\omega(\log n)$, the
above mapping scheme fails with probability $\negl(n)$.
The analysis adapts techniques from~\cite{ABK99}.

\begin{theorem}
\label{thm:kvs}
Let $\Phi(n)=\omega(\log n)$.
The probability that, when inserting $n$ keys, 
mapping scheme $\calS$ places more than $\Phi(n)$ keys into
the super root is $\negl(n)$.
\end{theorem}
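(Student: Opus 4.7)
The plan is to bound $\E[X]$, where $X$ denotes the number of keys placed at the super root, and then apply a concentration inequality. A key $u$ is placed at the super root only if both of its leaf-to-root paths are fully \emph{saturated}: every node at each height $0,1,\ldots,H$ with $H=\Theta(\log\log n)$ along both paths contains its capacity of $t$ items.

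The core estimate is a layered bound on $\alpha_h$, the probability that a fixed height-$h$ node is full after all $n$ insertions. For a node $v$ at height $h+1$ to be full, it must hold at least $t$ items, and each such item must have had both of its chosen paths saturated up to height $h$ (otherwise it would have been placed lower). Only the $\Theta(2^{h+1})$ keys whose hash includes a leaf below $v$ are candidates, and by independence of the two hash choices of a key, each such candidate satisfies the ``both paths saturated up to height $h$'' condition with probability at most $\alpha_h^2$. Markov's inequality then yields the recurrence
\[
\alpha_{h+1} \;\leq\; \frac{C\cdot 2^{h+1}}{t}\,\alpha_h^2
\]
for an absolute constant $C$. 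Substituting $\beta_h := (2C/t)\cdot 2^h\alpha_h$ converts this into $\beta_{h+1}\leq\beta_h^2$, which exhibits double-exponential decay once $\beta_0<1$.

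The base case $\alpha_0$ is bounded by the probability that a single leaf is chosen by at least $t$ keys, which is at most $(2e/t)^t$ by a binomial tail; choosing the constant $t$ sufficiently large forces $\beta_0 < 1/2$. Iterating up to $h=H$ gives $\beta_H\leq\beta_0^{2^H} = \beta_0^{(\log n)^{\Omega(1)}}$, which is super-polynomially small, so $\alpha_H = \negl(n)$. The per-key super-root probability is at most $\alpha_H^2 = \negl(n)$, which gives $\E[X] = \negl(n)$. A Chernoff/McDiarmid bound applied to $X$ as a function of the $n$ independent hash choices $\Pi(u_1),\ldots,\Pi(u_n)$ then strengthens Markov's $\Pr[X\geq 1] = \negl(n)$ to $\Pr[X\geq \Phi(n)] = \negl(n)$ for any $\Phi(n) = \omega(\log n)$.

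The main technical obstacle is justifying the $\alpha_h^2$ factor in the recurrence: the events ``node $v$ is full'' for different $v$ are correlated through the shared history of placements, so independence cannot be invoked directly. I will handle this by adapting the witness-tree technique of~\cite{ABK99}: any overflow to the super root forces the existence of a small combinatorial witness whose likelihood factors cleanly over the independent hash values of the few keys it involves, and a union bound over the (controllably many) witness shapes yields the desired $\negl(n)$ bound.
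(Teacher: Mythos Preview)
Your high-level strategy---layered induction giving doubly-exponential decay, as in~\cite{ABK99}---matches the paper, but there is a quantitative gap in your final step. You claim $\beta_H \le \beta_0^{2^H} = \beta_0^{(\log n)^{\Omega(1)}}$ is super-polynomially small. However, each binary tree in the construction has $\Theta(\log n)$ leaves, so its height is $H = \log_2\Theta(\log n)$ and hence $2^H = \Theta(\log n)$, not $(\log n)^c$ for any $c > 1$. Consequently $\beta_H \le \beta_0^{\Theta(\log n)} = n^{-\Theta(1)}$ is only \emph{polynomially} small for any fixed node capacity $t$. Your per-key super-root probability $\alpha_H^2$ and hence $\E[X]$ are then $n^{-\Theta(1)}$, and Markov yields only $\Pr[X \ge \Phi(n)] = n^{-\Theta(1)}$, not the required $\negl(n)$. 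The McDiarmid fallback does not rescue this either: changing a single key's hash can cascade through the sequential insertion process, so bounded differences with constant $1$ is not available.

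The paper sidesteps this by tracking the \emph{count} $H_i$ of full level-$i$ nodes rather than the probability $\alpha_i$ that a fixed node is full. It proves inductively that $\Pr[H_i > \beta_i] = \negl(n)$ whenever $\beta_i = \omega(\log n)$: conditioned on $H_i \le \beta_i$, each key lands at level $\ge i{+}1$ with probability at most $(\beta_i\, 2^{i+1}/n)^2$ over its own hash randomness, and Chernoff on the sum over keys gives a tail of $e^{-\Omega(\beta_{i+1})} = n^{-\omega(1)}$ \emph{precisely because} $\beta_{i+1} = \omega(\log n)$. The induction runs only up to the last level $i^\star$ with $\beta_{i^\star} \ge \Phi(n)$, and one further Chernoff step there bounds the super-root load by $\Phi(n)$ except with negligible probability. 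The $\omega(\log n)$ magnitude of the counts is exactly what makes every Chernoff tail negligible; your probability-based recurrence discards this scale, and iterating it over a height-$\Theta(\log\log n)$ tree cannot recover $\negl(n)$. Your instinct to invoke witness trees for the $\alpha_h^2$ correlation step is reasonable, but the count-based formulation handles that issue more directly via conditioning on $\{H_i \le \beta_i\}$.
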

\begin{proof}
Our key observation is that a block is stored in a level-($i$+1) slot
if and only
if the two selected buckets are filled up to level $i$.
If we denote by $H_i$ the number of filled node at level $i$, the
probability of selecting a bucket that is filled up to level $i$ is
$H_i\cdot 2^i/n$.
For a block to be allocated to a slot at level $i+1$, it must be the case that
both selected buckets are filled up to level $i$ which has probability $\left(\frac{H_i\cdot 2^i}{n}\right)^2$.
For convenience, we define the sequence $\beta_i$ by setting $\beta_0=\frac{n}{e\cdot 3^4}$ and
$\beta_{i+1}=\frac{e}{n}\cdot \beta_{i}^2\cdot 2^{2(i+1)}$, which we
will use later in the proof. First, we present a useful property about
the sequence and then we show that the probability that $H_i>\beta_i$ is $\negl(n)$.
The proof of Lemma~\ref{lemma:closed} can be found in Appendix~\ref{sec:kvs_proofs_app}.
\begin{lemma}
\label{lemma:closed}
For all $i \ge 0$,
        $$\beta_i=\frac{n}{e}\cdot\left(\frac{2}{3}\right)^{2^{i+2}}\left(\frac{1}{2}\right)^{2(i+2)}.$$
\end{lemma}

\begin{lemma}
\label{lemma:negligible}
If $\beta_{i}=\omega(\log n)$ then,
$\Pr[H_{i}>\beta_{i}] \le i/ n^{\omega(1)}$.
\end{lemma}
\begin{proof}
We proceed by induction on $i$.
The base case $i=0$ is established by setting $c>3^4\cdot e$.
We next upper bound the probability that after inserting $n$ blocks there are more than $\beta_{i+1}$ filled nodes at
level $i+1$, given that we start with $H_i$ nodes filled at level $i$.
To do so, we define $X_j$ to be that 0/1 random variable that is $1$ iff the $j$-th block ends up at level $i+1$.
Clearly, $\Pr[X_j=1]=\left(\frac{H_i\cdot 2^{i+1}}{n}\right)^2$ and, if $H_i\le \beta_i$,
$$\mu_{i+1}:=\E\left[\sum_j X_j\right]\leq \frac{\beta_i^2\cdot 2^{2(i+1)}}{n}=\frac{\beta_{i+1}}{e}.$$
We thus have
\begin{align*}
& \Pr[H_{i+1}>\beta_{i+1}\mid H_i\leq \beta_i]\leq \frac{\Pr[H_{i+1}>\beta_{i+1}]}{\Pr[H_i\leq \beta_i]}\\
=&\frac{\Pr[\sum_jX_j> e\cdot \mu_{i+1}]}{\Pr[H_i\leq \beta_i]}
\leq\frac{e^{-e\cdot\beta_{i+1}}}{\Pr[H_i\leq \beta_i]}
\end{align*}
where the last inequality follows by Theorem~\ref{thm:chernoff}.
By Lemma~\ref{lemma:closed}, the sequence $\beta_i$ decreases with $i$ and thus
$\beta_i\geq \beta_{i+1}=\omega(\log n)$.
The proof is then completed by observing that
\begin{align*}
        \Pr[&H_{i+1}>\beta_{i+1}]\\
&\le \Pr[H_{i+1}>\beta_{i+1}\mid H_i\le\beta_i]\cdot\Pr[H_i\le\beta_i]
+\Pr[H_i > \beta_i]\\
	&\le 1/ n^{\omega(1)} + i/n^{\omega(1)} \le (i+1)/n^{\omega(1)}
\end{align*}
using our hypothesis that $\Pr[H_i > \beta_i] = i/n^{\omega(1)}$.
\end{proof}

We now conclude the proof of Theorem~\ref{thm:kvs}.
Fix a function $\Phi(n)=\omega(\log n)$ and let $i^\star$
be the largest index such that $\beta_{i^\star}\geq\Phi(n)$.
By Lemma~\ref{lemma:closed}, we obtain that $i^\star=\Theta(\log\log n)$.
Following a reasoning similar to the one adopted in proof of Lemma~\ref{lemma:negligible}, we can prove
that the expected number of filled nodes at level $i^\star+1$ is at most $\beta_{i^\star+1}/e$, given
that no more than $\beta_{i^\star}$ nodes are filled at level $i^\star$.
By Lemma~\ref{lemma:negligible}, the condition
holds except with negligible probability.
From the definition of $\beta_{i^\star+1}$, we obtain that, for some constant $\alpha$ such that
$c\geq \alpha\cdot\beta_{i^\star+1}$,
the probability that more than $c$
nodes are filled at level $i^\star+1$ is inverse exponential in $c$ by Chernoff
Bounds, and thus $\negl(n)$.
\end{proof}

\subsection{Wrapping up the Proof}
We complete our $\dpkvs$ construction by observing that the mapping
scheme described above has $k(n)=2$, $s(n)=\Theta(\log\log n)$ and $c(n)=\Phi(n)$ for any $\Phi(n) = \omega(\log n)$.

\begin{theorem}
	The above $\kvs$ scheme is a $\edpkvs$ with $\epsilon = O(\log n)$
	that returns $O(\log\log n)$ blocks. The server uses $O(n)$ blocks of storage and,
	for any function $\Phi(n)=\omega(\log n)$, the probabilty that the client
	stores more than $O(\Phi(n) \cdot \log\log n)$ blocks is $\negl(n)$.
\end{theorem}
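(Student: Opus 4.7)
The plan is to derive this theorem as a direct instantiation of Theorem~\ref{thm:kvs_map_ram}, plugging in the parameters of the oblivious two-choice hashing mapping scheme from the previous subsection. Specifically, I would identify the parameters of the mapping scheme: there are $b(n)=n$ buckets (one per leaf), each bucket consists of the path from a leaf up to the super root so $s(n)=\Theta(\log\log n)$, each key chooses $k(n)=2$ buckets, and Theorem~\ref{thm:kvs} guarantees that the super root holds at most $c(n)=\Phi(n)$ blocks except with negligible probability, for any $\Phi(n)=\omega(\log n)$.

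Once these parameters are in hand, the bandwidth bound $O(k(n)\cdot s(n)) = O(\log\log n)$, the privacy bound $\epsilon = O(k(n)\cdot\log n) = O(\log n)$, and the client-storage bound $O(s(n)\cdot\Phi(n) + c(n)) = O(\Phi(n)\cdot\log\log n)$ all follow immediately from Theorem~\ref{thm:kvs_map_ram}. The only claim not covered directly by Theorem~\ref{thm:kvs_map_ram} is the $O(n)$ server storage bound, which I would argue separately from the geometry of the tree arrangement: there are $\Theta(n/\log n)$ binary trees, each has $\Theta(\log n)$ leaves and hence $\Theta(\log n)$ internal nodes total, and each node stores $\Theta(1)$ blocks, giving $\Theta(n/\log n)\cdot \Theta(\log n)\cdot \Theta(1)=\Theta(n)$ server blocks; the super root is stored on the client and does not contribute to server storage.

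The one subtlety I would make sure to flag is the correctness of the failure probability argument: the $O(\Phi(n)\cdot \log\log n)$ client-storage bound and the implicit assumption that no insertion fails both rely on the event that the super root never overflows, which by Theorem~\ref{thm:kvs} happens with probability $\negl(n)$ across the insertion of up to $n$ keys. I would verify that the additive $\negl(n)$ failure probabilities from Theorem~\ref{thm:kvs} and from the $\dpram$ analysis (Lemma~\ref{lemma:oram_store} via Theorem~\ref{thm:kvs_map_ram}) combine via a union bound to still yield $\negl(n)$ overall.

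Honestly, this wrap-up is largely bookkeeping, since the heavy lifting was carried out in establishing Theorem~\ref{thm:kvs_map_ram} and Theorem~\ref{thm:kvs}; the only place where I would need to be careful is verifying that the definition of mapping scheme used in Theorem~\ref{thm:kvs_map_ram}, in particular the uniform bucket size $s(n)$, is indeed satisfied by our tree-based construction. Here the relevant observation is that every bucket is a root-to-leaf path of the same length (up to constant factors) through a tree of depth $\Theta(\log\log n)$, so padding is not needed and privacy is preserved.
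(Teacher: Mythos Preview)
Your proposal is correct and matches the paper's approach: the paper's ``proof'' is simply the one-line observation that the oblivious two-choice mapping scheme has $k(n)=2$, $s(n)=\Theta(\log\log n)$, and $c(n)=\Phi(n)$, after which the theorem follows by instantiating Theorem~\ref{thm:kvs_map_ram}. You are in fact more thorough than the paper, which does not spell out the $O(n)$ server-storage calculation or the union-bound bookkeeping you mention.
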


\section{Conclusions}
\label{sec:concl}
We consider privacy-preserving storage protocols with small overhead that could be implemented
with large-scale, frequently accessed storage infrastructures without
negatively impacting response times or resource costs. Our main question is
to find the best privacy that can be achieved by small overhead storage schemes.

We formulate our privacy notion using differentially private access,
which is a generalization
of the oblivious access provided by both ORAM and PIR.
We present strong evidence that the best privacy achievable by any constant overhead
differentially private storage schemes must have privacy budgets $\epsilon = \Omega(\log n)$.
For $\dpram$ and $\dpir$, we present constructions with asymptotically optimal
$\epsilon = \Theta(\log n)$ privacy budgets and $O(1)$ overhead.
For $\dpkvs$, we present a scheme with asymptotically optimal $\epsilon = \Theta(\log n)$
privacy budgets and only $O(\log\log n)$ overhead which is exponentially better than
previous constructions. Our $\dpkvs$ uses a novel, oblivious variant of two-choice hashing
that uses only $O(n)$ server storage that may be of independent interest.

Therefore, we answer that the best privacy achievable by privacy-preserving
storage systems with small overhead
is differentially private access with $\epsilon = \Theta(\log n)$.
On the other hand, any storage scheme
achieving stronger privacy most likely must incur non-trivial overhead
compared to plaintext access.

\bibliographystyle{abbrv}
\bibliography{biblio}

\appendix
\section{Tools}
In this section we briefly review some tools that we use in the design and in the analysis of our constructions.


\subsection{Power of Two Choices}
\label{sec:two_choices}
The {\em power of two choices} concept was motivated from the classical
{\em balls and bins} problem. The balls and bins concepts considers
$n$ balls and $n$ bins. Each of the $n$ balls chooses a single bin independently
and uniformly at random. The {\em load} of a bin is the number of
balls that occupy the bin. It has been shown that with high probability,
the load of every bin does not exceed $O(\log n / \log\log n)$
~\cite{DR98}.
Consider the case where each of the $n$ balls now chooses two bins
independently and uniformly at random. The ball occupies
the least loaded of the two chosen bins. This slight alteration ensures
that with high probability, the load of each bin does not exceed
$O(\log\log n)$~\cite{Mitzenmacher01}.
This result demonstrates that having two choices significantly
improves bounds on the maximum load. Furthermore,
it turns out that increasing the
number of choices to $d \ge 3$ only improves the maximum load bounds
by a constant. It is important to note that the allocation of each
is chosen independently from the allocation of all other balls.

\begin{theorem}\label{thm:advDel}
At any time, the load of any bin produced by the power of two choices process exceeds
$O(\log\log n)$ with probability at most $1/n^{\Omega(\log\log n)}$.
\end{theorem}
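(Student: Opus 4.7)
The plan is to follow the layered induction of Azar--Broder--Karlin--Upfal, which is essentially the same technique already used in the proof of Theorem~\ref{thm:kvs} but specialized to the flat two-choice setting (no binary tree, no super-root). Since bin loads only grow as balls are inserted, it suffices to bound the maximum load after all $n$ balls are placed; this automatically covers "at any time". For $i \ge 0$, let $H_i$ be the (random) number of bins holding at least $i$ balls after all insertions.

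The key observation is monotone: a bin can reach load $\ge i+1$ only when some arriving ball has both of its two independent choices land in bins that already contain $\ge i$ balls. Thus, conditioned on $H_i \le \beta_i$, the probability that any given ball contributes to $H_{i+1}$ is at most $(\beta_i/n)^2$, and so $\E[H_{i+1} \mid H_i \le \beta_i] \le n \cdot (\beta_i/n)^2 = \beta_i^2/n$. This motivates the recurrence $\beta_{i+1} = e \cdot \beta_i^2/n$ with a trivial base case such as $\beta_4 \le n/4$ (a deterministic bound, since at most $n/4$ bins can hold $\ge 4$ balls after $n$ insertions).

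Solving the recurrence yields a doubly exponential decay $\beta_i \le n \cdot (1/2)^{2^{i-c}}$ for a constant $c$, exactly analogous to Lemma~\ref{lemma:closed} but without the extra $2^{2(i+1)}$ factor coming from the tree structure. Consequently there is an index $i^\star = \Theta(\log\log n)$ at which $\beta_{i^\star}$ first drops to $\Theta(\log n)$. I would then prove by induction on $i$, mirroring Lemma~\ref{lemma:negligible}, that $\Pr[H_i > \beta_i] \le i/n^{\omega(1)}$ as long as $\beta_i = \omega(\log n)$: the inductive step combines the conditional expectation bound above with a Chernoff tail bound on the $0/1$ indicator variables for each ball, and sums the failure probability with $\Pr[H_{i-1} > \beta_{i-1}]$ via a union bound.

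Finally, I would push the induction one extra step past $i^\star$: with $\beta_{i^\star} = \Theta(\log n)$, the conditional expectation of $H_{i^\star+1}$ is $O(\log^2 n/n) = o(1)$, so a Chernoff (or even Markov-style) bound shows $\Pr[H_{i^\star + \Theta(\log\log n)} \ge 1] \le 1/n^{\Omega(\log\log n)}$, giving the desired max-load bound $i^\star + O(\log\log n) = O(\log\log n)$. The main obstacle is the seam between the "concentration regime" $\beta_i = \omega(\log n)$ where Chernoff gives inverse-superpolynomial tails, and the "rare event regime" $\beta_i = O(\log n)$ where the bound degrades to $1/n^{\Omega(\log\log n)}$; this is precisely the transition that dictates the final probability stated in the theorem, and is handled by a single extra Chernoff application at level $i^\star$ rather than by any new idea.
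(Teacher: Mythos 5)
The paper does not actually prove Theorem~\ref{thm:advDel}: it is quoted as a classical result (with a citation to Mitzenmacher's two-choice survey), so there is no in-paper proof for me to compare against line by line. What I can compare against is the paper's proof of Theorem~\ref{thm:kvs}, which applies the same Azar--Broder--Karlin--Upfal layered induction to the tree variant of the scheme, and you explicitly model your outline on it. Up through the recurrence $\beta_{i+1} = e\beta_i^2/n$, the deterministic base case $\beta_4 \le n/4$, and the claim that $\Pr[H_i > \beta_i] \le i/n^{\omega(1)}$ for as long as $\beta_i = \omega(\log n)$, your outline is the right argument and matches what Lemma~\ref{lemma:negligible} in the paper does (both treat the conditioning on the process history informally in the same way, so I won't press that point).

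The gap is in the last paragraph, where the theorem's probability bound actually lives. You claim that ``a single extra Chernoff application at level $i^\star$'' with $\beta_{i^\star} = \Theta(\log n)$ gives $\Pr[H_{i^\star + \Theta(\log\log n)} \ge 1] \le 1/n^{\Omega(\log\log n)}$. It does not. With $\beta_{i^\star} = \Theta(\log n)$, the conditional expectation of $H_{i^\star+1}$ is $O(\log^2 n / n)$, and Chernoff or Markov on a single level gives only $\Pr[H_{i^\star+1}\ge 1] \le n^{-1+o(1)}$, far short of $n^{-\Omega(\log\log n)}$. Nor can you blindly multiply over $\Theta(\log\log n)$ more levels, since the events $\{H_j \ge 1\}$ are nested rather than independent, and knowing $H_j\ge 1$ does not tell you $H_j$ is small. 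To actually obtain the $n^{-\Omega(\log\log n)}$ tail you need one of two different moves: (a) stop the layered induction earlier, at a level $i'$ where $\beta_{i'} = n^{\Omega(1)}$ (so $\mu_{i'+1} = n^{-\Omega(1)}$) and apply Chernoff with deviation $t=\Theta(\log\log n)$, which yields $\Pr[H_{i'+1}\ge t] \le (e\mu_{i'+1}/t)^t = n^{-\Omega(\log\log n)}$ and then argue the remaining $O(\log\log n)$ levels deterministically sit above a set of size $<t$; or (b) use the witness-tree / encoding argument in which a bin of height $i^\star + k$ exposes a depth-$k$ binary tree of distinct ball choices, whose probability you bound directly. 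Either way there is a genuine extra idea in the transition from the concentration regime to the $n^{-\Omega(\log\log n)}$ tail, not just ``one more Chernoff,'' and your writeup currently glosses over it.
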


\subsection{Chernoff Bound}
The next theorem gives a bound on the tails of a binomial distribution that we will use to analyze our constructions.
See~\cite{Mitzenmacher00} for a proof.
\begin{theorem}
\label{thm:chernoff}
Let $X_i$, for $i=1,\ldots,n$ be independent binary random variables with $\Pr[X_i=1]=p$ and let $\mu:=np$.
Then for every $t\geq \mu$, it holds that
$$\Pr\left[\sum_{i=1}^n X_i\geq t\right]\leq \frac{\mu^t}{t^t}\cdot e^{t-\mu}$$
and, in particular,
$$\Pr\left[\sum_{i=1}^n X_i\geq e\cdot\mu\right]\leq e^{-\mu}.$$
\end{theorem}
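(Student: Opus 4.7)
The plan is to use the standard Chernoff/moment-generating-function technique: apply Markov's inequality to an exponentiated version of the sum $S := \sum_{i=1}^n X_i$, compute the moment generating function explicitly using independence, relax it with the elementary inequality $1 + x \le e^x$, and then optimize the free parameter.

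Concretely, I would begin by fixing any $\lambda \ge 0$ and writing
$$
\Pr[S \ge t] = \Pr[e^{\lambda S} \ge e^{\lambda t}] \le e^{-\lambda t} \cdot \mathbb{E}[e^{\lambda S}].
$$
Since the $X_i$'s are independent Bernoulli$(p)$ variables, $\mathbb{E}[e^{\lambda S}] = \prod_{i=1}^n \mathbb{E}[e^{\lambda X_i}] = (1 + p(e^\lambda - 1))^n$. Applying $1 + x \le e^x$ with $x = p(e^\lambda - 1)$ gives the bound $\mathbb{E}[e^{\lambda S}] \le e^{np(e^\lambda - 1)} = e^{\mu(e^\lambda - 1)}$. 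Hence
$$
\Pr[S \ge t] \le \exp\bigl(\mu(e^\lambda - 1) - \lambda t\bigr).
$$

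The next step is to choose $\lambda$ to minimize the right-hand side. Differentiating the exponent with respect to $\lambda$ gives $\mu e^\lambda - t = 0$, i.e., $\lambda = \ln(t/\mu)$, which is nonnegative precisely because the hypothesis $t \ge \mu$ is assumed. Plugging this $\lambda$ back in yields
$$
\Pr[S \ge t] \le \exp\bigl(t - \mu - t \ln(t/\mu)\bigr) = \frac{\mu^t}{t^t} \cdot e^{t - \mu},
$$
which is the first claimed inequality. The second inequality then follows by specializing to $t = e\mu$: the bound becomes $(\mu/(e\mu))^{e\mu} \cdot e^{e\mu - \mu} = e^{-e\mu} \cdot e^{(e-1)\mu} = e^{-\mu}$.

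There is no real obstacle here beyond making sure the optimization step is justified; the only subtlety is verifying that the choice $\lambda = \ln(t/\mu)$ is in the valid range $\lambda \ge 0$, which is exactly where the hypothesis $t \ge \mu$ enters. Everything else is routine calculation, and the full proof is short enough that I would not try to hide any step.
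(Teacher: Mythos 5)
Your proof is correct and is the standard moment-generating-function (Chernoff) argument; the paper itself does not prove Theorem~\ref{thm:chernoff} but simply cites a textbook, and the argument you give is exactly the one found there. The optimization step and the check that $\lambda=\ln(t/\mu)\ge 0$ under $t\ge\mu$ are both handled correctly, and the specialization to $t=e\mu$ is a clean calculation, so there is nothing to fix.
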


\section{Proof of Theorem~\ref{thm:dpir}}
\label{dpir-proof}
We remind the reader that, for $i\in[N]$, $\ir(i)$ is the random variable of the set
of blocks transferred by the server when the client wishes to access block $B_i$.
We next compute $\Pr[\ir(i)=\calT]$ for a subset $\calT$ of $K$ blocks.

\smallskip\noindent{\bf Case 1:} $B_i\in\calT$.
With probability $1-\alpha$, we know that
$B_i\in\calT$ and the remaining $K-1$ blocks are chosen uniformly at random.
On the other hand, with probability $\alpha$, for $B_i \in \calT$, we need
to choose $B_i$ as one of the $K$ blocks that are chosen randomly.
Therefore,
$$
\Pr[\ir(i) = \calT] = \frac{1-\alpha}{{N-1 \choose K-1}} + \frac{\alpha}{{N \choose K}}.
$$

\smallskip\noindent{\bf Case 2:} $B_i\notin\calT$.
In this case, we know with probability $(1-\alpha)$, $\calT$ is not possible.
With probability $\alpha$, we need to ensure that $B_i$ is not one of the
$K$ blocks chosen. So,
$$
\Pr[\ir(i) = \calT] = \frac{\alpha}{{N \choose K}}.
$$
Let $Q$ and $Q'$ be any two query sequences of length $L$ for which
$d_H(Q,Q')=1$. Since, the above algorithm is stateless, we know
that its behavior depends solely on the query index.
Therefore, it suffices to consider two different sequences queries of
length one,
$q \ne q' \in [N]$. If we choose a transcript $\calT$ such that
$B_{q}, B_{q'} \notin \calT$ or $B_{q}, B_{q'} \in \calT$, then
we see that $\Pr[\ir(q) = \calT] = \Pr[\ir(q') = \calT]$.
If, instead, $B_{q} \in \calT$ and $B_{q'} \notin \calT$ we have
\begin{align*}
\frac{\Pr[\ir(q) = \calT]}{\Pr[\ir(q') = \calT]}
        &\le \frac{(1-\alpha){N \choose K}}{\alpha{N - 1 \choose K - 1}} + 1
        \le \frac{(1-\alpha)N}{\alpha K} + 1
= e^\epsilon.
\end{align*}
Since this holds for any single transcript $\calT$, the same ratio
holds for any set of transcripts completing the proof.

\section{Lower Bound for Multiple-Server DP-IR}
\label{sec:low_ir_mult}
We extend the single server lower bound to the multiple server IR model with
$D$ servers and $D_A$ adversarial servers.
If $D_A = D$, then the scenario collapses to the single adversarial server
case of Section~\ref{sec:low_ir}. In this section,
we assume that $D_A < D$ and there
is always at least one honest server.
For convenience, we define $t = \frac{D_A}{D}$ to be the
fraction of servers corrupted by $\adv$ where $0 < t < 1$.

The choice of adversarial servers is modeled as
a challenger $\ch$ with a $\ir$ protocol and an adversary $\adv$ with the
power to corrupt $t$ fraction of servers.
Using the knowledge of $\ir$ but not the internal randomness of $\ir$,
$\adv$ picks $t$ fraction of the $D$ servers to corrupt.
$\ch$ then runs $\ir$ and $\adv$ gets the transcript of downloads
sent by $\ir$ to the $t$ fraction of corrupted servers.
We denote the adversary's transcript on query sequence $Q$ as
$\ir^{\adv}(Q)$.
We now proceed to present our lower bound in the multi-server model.

\begin{theorem}
\label{thm:multiServerPIR}
If $\ir$ is a $(\epsilon,\delta)$-DP $D$-server IR in the balls and bins model
for any $\epsilon,\delta \ge 0$ and error probability
$\alpha < 1 - (\delta / t)$, then $\ir$ performs
$$
\Omega\left(\frac{((1-\alpha)t-\delta) \cdot n}{e^\epsilon}\right)
$$
expected operations.
\end{theorem}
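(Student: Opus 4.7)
The plan is to reproduce the argument of Theorem~\ref{thm:low_ir} at the level of the adversarial view $\ir^{\adv}(\cdot)$, with an extra averaging step that converts the global correctness guarantee into a lower bound on the probability that the adversary sees the queried block. The starting point is the observation that the view $\ir^{\adv}(\cdot)$ is itself an $(\epsilon,\delta)$-differentially private function of the query: for any fixed choice of corrupted servers the analog of Lemma~\ref{lemma:pir1} gives
$$
\Pr[B_c \in \ir^{\adv}(a)] \le e^\epsilon \Pr[B_c \in \ir^{\adv}(b)] + \delta,
\qquad
\Pr[B_c \notin \ir^{\adv}(a)] \le e^\epsilon \Pr[B_c \notin \ir^{\adv}(b)] + \delta,
$$
since membership/non-membership of a particular ball in the transcript is just a predicate on $\ir^{\adv}$. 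The task is then to lower bound $\Pr[B_j \in \ir^{\adv}(j)]$ and feed it into the first inequality.

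Next I would carry out the averaging. By correctness, on the query for $B_j$ the protocol $\ir$ downloads $B_j$ from some server with probability at least $1-\alpha$. Consider the experiment in which the set $S$ of $D_A$ corrupted servers is chosen uniformly at random among size-$D_A$ subsets of the $D$ servers, independently of $\ir$'s internal randomness. For any fixed server $s$, $\Pr[s \in S] = D_A/D = t$, so over the joint randomness $\Pr[B_j \in \ir^{S}(j)] \ge (1-\alpha) \cdot t$. By averaging, there exists a particular choice $S^\star$ of $tD$ servers such that $\Pr[B_j \in \ir^{S^\star}(j)] \ge (1-\alpha) t$, and the adversary simply fixes $\adv = S^\star$ (which is legal, since $\adv$ chooses the corrupted set knowing $\ir$ but not its coins).

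Now plug this into the DP inequality for the complement: for any $i \ne j$,
$$
1 - (1-\alpha)t \ge \Pr[B_j \notin \ir^{\adv}(j)] \ge \frac{\Pr[B_j \notin \ir^{\adv}(i)] - \delta}{e^\epsilon},
$$
which rearranges to
$$
\Pr[B_j \in \ir^{\adv}(i)] \ge \frac{(1-\alpha)t - \delta}{e^\epsilon},
$$
well-defined under the hypothesis $\alpha < 1 - \delta/t$. Summing over $j \ne i$ gives
$$
\E[|\ir^{\adv}(i)|] \ge (n-1) \cdot \frac{(1-\alpha)t - \delta}{e^\epsilon}
= \Omega\!\left(\frac{((1-\alpha)t - \delta)\, n}{e^\epsilon}\right).
$$
Finally, the total number of server operations is at least $|\ir^{\adv}(i)|$, since those operations are a subset of all operations performed, giving the claimed bound.

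The only delicate step is the averaging argument: I must take care that $\adv$'s choice of corrupted set is allowed to depend on the protocol $\ir$ but not on its internal randomness, so that fixing $\adv = S^\star$ after a probabilistic averaging over $S$ is a legal adversarial strategy in the model defined in this section. Once that is pinned down, the remainder of the argument is a direct lift of the single-server proof of Theorem~\ref{thm:low_ir}, with $1-\alpha$ replaced by the slightly weaker $(1-\alpha)t$ that reflects the adversary's limited visibility.
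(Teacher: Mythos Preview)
Your overall plan matches the paper's: the adversarial view is itself $(\epsilon,\delta)$-DP, correctness plus partial visibility give $\Pr[B_j \in \ir^{\adv}(j)] \ge (1-\alpha)t$, and then one reruns the argument of Theorem~\ref{thm:low_ir}. But two steps in the execution fail.

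The main gap is the derandomization. For a \emph{fixed} $j$, averaging over $S$ indeed yields some $S^\star=S^\star_j$ with $\Pr[B_j \in \ir^{S^\star_j}(j)] \ge (1-\alpha)t$, but this $S^\star$ depends on $j$. When you later write ``summing over $j\ne i$,'' you need $\Pr[B_j \in \ir^{\adv}(i)] \ge ((1-\alpha)t-\delta)/e^\epsilon$ for \emph{every} $j$ with the \emph{same} $\adv$, and nothing you have shown gives a uniform $S^\star$. (A protocol that routes the query for $B_j$ deterministically to server $j \bmod D$ already shows no single size-$D_A$ set witnesses the $(1-\alpha)t$ bound for all $j$.) The paper sidesteps this by not derandomizing at all: it lets $\adv$ be the \emph{randomized} adversary that picks the $D_A$ corrupted servers uniformly and works throughout with probabilities over the joint coins of $\ir$ and $\adv$. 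Then $\Pr[B_j \in \ir^{\adv}(j)] \ge (1-\alpha)t$ holds for every $j$ simultaneously, the DP inequality survives averaging over $\adv$'s coins, and the sum goes through. An equivalent repair of your argument is to postpone the averaging to the very end: first show $\E_S\bigl[\E[|\ir^S(i)|]\bigr] \ge (n-1)((1-\alpha)t-\delta)/e^\epsilon$, and only then extract a good $S^\star$.

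Separately, your displayed rearrangement is wrong: from $1-(1-\alpha)t \ge (\Pr[B_j \notin \ir^{\adv}(i)]-\delta)/e^\epsilon$ one obtains $\Pr[B_j \in \ir^{\adv}(i)] \ge 1 - e^\epsilon\bigl(1-(1-\alpha)t\bigr) - \delta$, not $((1-\alpha)t-\delta)/e^\epsilon$. The bound you actually want follows from the \emph{direct} inequality $\Pr[B_j \in \ir^{\adv}(j)] \le e^\epsilon \Pr[B_j \in \ir^{\adv}(i)] + \delta$ (the first half of your Lemma~\ref{lemma:pir1} analogue), exactly as in the proof of Theorem~\ref{thm:low_ir}.
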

\begin{proof}
We prove our lower bound for the adversary $\adv$
that corrupts $D_A\leq t D$ randomly and uniformly chosen servers.
We know
$$\Pr[B_i \in \ir(i)] \ge 1-\alpha.$$
If $B_i$ appears in $\ir^{\adv}(i)$,
$\Pr[B_i \in \ir^{\adv}(i)] \ge (1-\alpha)t$.
The rest of the proof follows identically to the proof of Theorem~\ref{thm:low_ir}.
\end{proof}

As a result, we show that the $\dpir$ scheme in~\cite{TDG16} is optimal
when $t$ is considered constant.

\section{Remaining DP-RAM Proofs}
\label{sec:ram_proofs_app}
In this section we give the proofs that were postponed
from Section~\ref{sec:ram}.
We start by showing that the client uses $O(c)$ memory except
with negligible probability for $c=\omega(\log n)$.

\begin{lemma}
\label{lemma:oram_store}
If $p \le \frac{c}{n}$ and $c=\omega(\log n)$, then $\ram$ stores
at most $O(c)$
blocks in the block stash on client storage at any point in time
except with negligible probability.
\end{lemma}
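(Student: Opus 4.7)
The plan is to show that at every fixed point in time, membership in the stash can be described as the sum of $n$ independent Bernoulli$(p)$ random variables, so a standard Chernoff argument combined with a union bound yields the result.

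First, I would carefully examine, for each block $B_j$, what determines whether $B_j$ is in the stash at a fixed time $t$. The protocol touches $B_j$ only in two ways: (i) during $\Init$, $B_j$ is placed in the stash independently with probability $p$; and (ii) during the overwrite phase of any query whose target is $B_j$, the block is (re)placed in the stash independently with probability $p$. In particular, between these events the stash status of $B_j$ is unchanged, and whenever its status is refreshed, it is refreshed by an independent Bernoulli$(p)$ coin flip that is independent of the query sequence and of all coins used for other blocks. Hence at any fixed time $t$, the indicator $X_j^{(t)}$ of ``$B_j \in \bStash$'' is Bernoulli$(p)$, and the $X_j^{(t)}$ are mutually independent across $j \in [n]$. (The at-most-one block currently held by the client between download and overwrite adds at most $1$ to the total, which is absorbed into the $O(c)$.)

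Next, let $S_t := \sum_{j \in [n]} X_j^{(t)}$ denote the stash size at time $t$. Since $p \le c/n$, we have $\E[S_t] \le c$. By the Chernoff bound (Theorem~\ref{thm:chernoff}), for a suitable constant $\gamma > 1$,
\[
    \Pr[S_t > \gamma c] \;\le\; e^{-\Omega(c)}.
\]
Because $c = \omega(\log n)$, this is $n^{-\omega(1)} = \negl(n)$.

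Finally, assuming a polynomial number of queries (which is the standard regime), I would take a union bound over all time steps $t$ (both before and after each download/overwrite phase). This gives $\poly(n) \cdot \negl(n) = \negl(n)$, so with all but negligible probability $S_t = O(c)$ at every point in time. The only potential obstacle is the independence claim underlying Step~1: one must verify that the coin flip determining whether $B_{q_i}$ ends up in the stash during the $i$-th overwrite phase is truly independent of the adversary's queries and of other blocks' status, which follows because the protocol's overwrite coin is fresh and is used only on the currently queried block.
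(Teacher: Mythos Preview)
Your proposal is correct and follows essentially the same approach as the paper: observe that each block's stash membership at any fixed time is an independent Bernoulli$(p)$ variable, apply a Chernoff bound to get a $\negl(n)$ tail, and union-bound over the $\poly(n)$ time steps. Your write-up is in fact more careful than the paper's, which simply asserts the independence without your justification via the ``most recent refresh'' argument.
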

\begin{proof}
Pick any point in time. We note that each block $B_i$ has independent
probability $p$ of being stored in the client. We let $X_i = 1$
if and only if $B_i$ is stored on the client memory.
Let $X$ be the number of blocks stored in client memory, so
$X=X_1+\ldots+X_n$.
So,
we know that $\E[X]=\E[X_1 + \ldots + X_n]=pn \le c$.
By Chernoff Bounds, for any $\delta > 0$,
$$
\Pr[X > (1+\delta)c] \le \exp\left(\frac{-c\delta^2}{2+\delta}\right)
$$
which is $\negl(n)$ when $c = \omega(\log n)$.
The lemma follows, by a union bound over all points in time
(that is, the length of the queries which is polynomial in $n$).
\end{proof}

\begin{proof}[Proof of Lemma~\ref{lemma:download_prob}]
The numerator is trivially upper bounded by $1$ and the denominator is
lower bounded by the conjunctive probability.
We prove the lemma by showing that the latter is at least $p/n^2$.
To this aim, we consider
three cases depending on the values of $\prev(Q,j)$ and $o_{\prev(Q,j)}$.
Note that, if $\prev(Q,j)\ne +\infty$, $q_{\prev(Q,j)}=q_j$ by definition.

\smallskip\noindent{\bf Case 1:}
$\prev(Q,j)\ne+\infty$ and $o_{\prev(Q,j)}=q_j$.

Consider first the case $d_j=q_j$.
If during the overwrite phase of query $\prev(Q,j)$
block $B_{q_j}$ is stored in block stash (and this has probability $p$),
then $\ram^O_{\prev(Q,j)}(Q)=q_j$ with probability $1/n$. Then at the download phase of
query $j$, $B_{q_j}$ is found in block stash and therefore the probability that
$\ram^D_j(Q)=q_j$ is $1/n$.
On the other hand
if during the overwrite phase of query $\prev(Q,j)$
block $B_{q_j}$ is not stored in block stash (this has probability $1-p$),
then certainly $\ram^O_{\prev(Q,j)}(Q)=q_j$ and, since $B_{q_j}$ is not found in block stash
at the start of the download phase of query $j$,
$\ram^D_j(Q)=q_j$ with probability $1$. Altogether, we have
$\Pr[\ram^O_{\prev(Q,j)}(Q)=q_j\wedge \ram^D_j(Q)=q_j]
=1-p+1/n^3\geq 1/n^3.$
If, instead, $d_j\ne q_j$ then it means that $B_{q_j}$ must have been stored in the block stash
during the overwrite phase of query $\prev(Q,j)$ (this has probability $p$) and
then $\ram^O_{\prev(Q,j)}(Q)=q_j$ with probability $1/n$. Then, since $B_{q_j}$ is in stash at the
start of the download phase of query $j$
$\ram^D_j(Q)=d_j$ with probability $1/n$. Altogether, we have
$\Pr[\ram^O_{\prev(Q,j)}(Q)=q_j\wedge \ram^D_j(Q)=q_j]
=p/n^2.$

\smallskip\noindent{\bf Case 2.}
$\prev(Q,j)\ne+\infty$ and $o_{\prev(Q,j)}\ne q_j$.

If $o_{\prev(Q,j)}\ne q_j$ then it must be the case that, during the overwrite phase of query $\prev(Q,j)$,
block $B_{q_j}$ is stored in block stash (and this happens with probability $p$)
and $o_{\prev(Q,j)}$ is chosen
as overwrite block (and this happens with probability $1/n$). When the download phase of query $j$ starts,
block $B_{q_j}$ is found in the stash and so the probability that $d_j$ is selected as download blocks is $1/n$.
Whence
$\Pr[\ram^D_j(Q)=d_j\wedge\ram^O_{\prev(Q,j)}(Q)=o_{\prev(Q,j)}] =p/n^2.$

\smallskip\noindent{\bf Case 3.} $\prev(Q,j)=+\infty$.
In this case, $o_{\prev(Q,j)}=\perp$ and $B_{q_j}$ has not been queried before query $j$.
Therefore, at the start of the download phase of query $j$,
$B_{q_j}$ is found in block stash with probability $p$.
By using arguments similar to the ones of the previous cases,
we have
\begin{align*}
\Pr[&\ram^D_{j}(Q)=q_j\wedge\ram^O_{\prev(Q,j)}(Q)=\perp]\\
=&\Pr[\ram^D_{j}(Q)=q_j]=(1-p)+p/n
\end{align*}
and, for $d_j\ne q_j$,
\begin{align*}
\Pr[&\ram^D_{j}(Q)=d_j\wedge\ram^O_{\prev(Q,j)}(Q)=\perp]\\
=&\Pr[\ram^D_{j}(Q)=d_j]=p/n
\end{align*}
which completes the proof.
\end{proof}

\begin{proof}[Proof of Lemma~\ref{lemma:overwrite_prob}]
As in the proof of the previous lemma, we upper bound the numerator by $1$ and show that
the denominator is at least $n/p$ starting from the case $o_j=q_j$.
If, in the overwrite phase of the $j$-th query,
block $B_{q_j}$ is added to the block stash (and this happens with probability $1/n$)
then the same block is chosen as overwrite block with probability $1/n$.
In the remaining $1-p$ probability, $q_j$ is the overwrite block with probability $1$.
Therefore $\Pr[\ram^O_j(Q)=q_j]=(1-p) + p/n.$
If, instead, $o_j\ne q_i$, then the only case in which $o_j$ is the  overwrite block of phase $j$
is when $B_{q_j}$ is added to the block stash and this gives
$\Pr[\ram^O_j(Q)=o_j]=p/n,$
for every $o_j\ne q_j$.
For any $j \ne i$, $Q_j = Q'_j$ showing the first point. For any $o$,
$\Pr[\ram^O_i(Q') = o] \ge p/n$ giving the second result.
\end{proof}

{
\section{$\dpram$ Generalization}
\label{sec:general}
Our proof of differential privacy of $\dpram$ of Section~\ref{sec:ram}
can be seen to give a more general form of differential privacy.
Suppose $\dpram$ stores $n$ blocks and there exists
a repertoire $\Sigma$ of size $b = O(n)$. We interpret
$\Sigma$ as the specification of $b$ buckets where each
bucket contains exactly $s$ blocks. Note, buckets may overlap and two
different buckets may contain the same block.
Our privacy proof of $\dpram$ carries over to
the case in which a query retrieves all blocks in a bucket.
In other words, we interpret query sequences $Q$ of length $s\cdot l$ as $l$ subsequences
each of length $s$ taken from the $\Sigma$ of size $n$ (that is,
each subsequence corresponds to a bucket).
In this framework, a query sequence $Q=(q_1,\ldots,q_{s \cdot l})$ is associated with
sequence $Q=(\sigma_1,\ldots,\sigma_l)$ of length $l$ over $\Sigma$; that is,
$$Q=(
    \underbrace{q_1,\ldots,q_s}_{\sigma_1},
    \underbrace{q_{s+1},\ldots,q_{2\cdot s}}_{\sigma_2},\cdots
    \underbrace{q_{(l-1)s+1},\ldots,q_{l\cdot s}}_{\sigma_l}),$$
and $\sigma_1,\sigma_2,\ldots,\sigma_l\in\Sigma$ and $|\Sigma|=b =O(n)$.

The proof of Section~\ref{sec:ram} handles the case $s=1$ in which an access sequence
is a sequence of $n$ requests taken from the repertoire $\Sigma=[n]$.

To argue the general case let us consider first the case in which the server
explicitly stores the subsequences of $\Sigma$; specifically, each element of $\Sigma$ is a sequence
of $s$ of the original $n$ blocks and each is stored using storage equal to $s$ blocks.
Then an access sequence $Q$ of length $s\cdot l$ over $[n]$ is simply an access sequence 
$(\sigma_1,\ldots,\sigma_l)$ of length $l$ over $\Sigma$ and, as it is easily seen,
our proof still works and guarantees differential privacy with $\epsilon = \Theta(\log n)$.

This approach, unfortunately, has the drawback that the server storage grows by a factor of $s$.
However, we observe that the blocks that constitute the subsequence of $\Sigma$ need not to be explicitly
stored by the server that instead can store just the $n$ original blocks
and, each time it receives a request $\sigma\in\Sigma$,
the server fetches the needed blocks. Note that this transformation preserves differential
privacy with $\epsilon = \Theta(\log n)$ as this property is
independent of the actual layout of the parts of the individual atomic blocks (in this case the subsequences
$\sigma\in\Sigma$) on which the RAM is built.
The only modification required is that when retrieving or updating
a bucket $i$, the $\dpram$ must also check if the any block
is stored on the client
as part of another bucket $j \ne i$. If so, the block on client storage must
be returned as opposed to the one from the server for retrievals.
For updates, both the server copy and client copy must be updated.
}

\section{Remaining DP-KVS Proofs}
\label{sec:kvs_proofs_app}

\noindent
[Proof of Lemma~\ref{lemma:closed}]
We proceed by induction and we verify the base case by plugging in $i=0$.
For $i\geq 0$, we have
        \begin{eqnarray*}
                \beta_{i+1}&=&\frac{e}{n}\cdot \beta_{i}^2\cdot 2^{2(i+1)}\\
                           &=&\frac{e}{n}\cdot\frac{n^2}{e^2}\cdot
                                \left(\frac{2}{3}\right)^{2^{i+3}}
                                \left(\frac{1}{2}\right)^{4(i+2)}\cdot 2^{2(i+1)}\\
                           &=&\frac{n}{e}\cdot
                                \left(\frac{2}{3}\right)^{2^{i+3}}
                                \left(\frac{1}{2}\right)^{4(i+2)-2(i+1)}\\
                           &=&\frac{n}{e}\cdot
                                \left(\frac{2}{3}\right)^{2^{i+3}}
                                \left(\frac{1}{2}\right)^{2(i+3)}.
        \end{eqnarray*}

\section{DP-IR Pseudocode}
\label{sec:dpir_code}
We present the pseudocode for our $\dpir$ scheme of Section~\ref{sec:ir}.
\begin{algorithm}
	\caption{$\pirC.\!\Query$: read a block}
        \renewcommand{\algorithmicrequire}{\textbf{Input:}}
        \renewcommand{\algorithmicensure}{\textbf{Output:}}
\begin{algorithmic}
        \REQUIRE $\textIndex,\epsilon,\alpha$
	\STATE Set $\calT=\emptyset$. \qquad \COMMENT{initializing the download set}
	\STATE Pick random number $r\in[0, 1]$.
	\IF {$r > \alpha$}
		\STATE $\calT \leftarrow \calT \cup \{\textIndex\}$.
	\ENDIF
	\STATE Set $K=\lceil(1-\alpha)N/(e^\epsilon-1)\rceil$.
	\WHILE{$|\calT| < K$}
		\STATE Pick $j$ uniformly at random from $[N]\setminus \calT$.
		\STATE $\calT \leftarrow \calT \cup \{j\}$.
	\ENDWHILE
	\STATE Send $\calT$ to server and receive $\{B_j\}_{j \in \calT}$ from server.
	\IF {$r>\alpha$}
		\RETURN $B_i$.
	\ELSE
		\RETURN $\perp$.
	\ENDIF
\end{algorithmic}
\end{algorithm}

\section{DP-RAM Pseudocode}
\label{sec:ram_code}
We present the pseudocode for the algorithm in Section~\ref{sec:ram}.
\begin{algorithm}
        \renewcommand{\algorithmicrequire}{\textbf{Input:}}
        \renewcommand{\algorithmicensure}{\textbf{Output:}}
	\caption{$\oramC.\!\Init$: initialize client and server storage}
	\label{algo:initOram}
\begin{algorithmic}
        \REQUIRE $1^\lambda, B_1, \ldots, B_N$
	\STATE Randomly select $K\leftarrow\{0,1\}^\lambda$.
        \STATE Initialize array $A$ of size $N$ on server.
	\STATE Initialize hash table $\bStash$ on client.
	\FOR{$i \leftarrow 1,\ldots,N$}
		\STATE Set $A[i]\leftarrow\Enc(K, B_i)$.
		\STATE Pick $r$ uniformly at random from $[N]$.
		\IF {$r \le C$}
			\STATE Set $\bStash[i]\leftarrow B_i$.
		\ENDIF
	\ENDFOR
\end{algorithmic}
\end{algorithm}

\begin{algorithm}
        \renewcommand{\algorithmicrequire}{\textbf{Input:}}
        \renewcommand{\algorithmicensure}{\textbf{Output:}}
        \caption{$\oramC.\!\Query$: reading and writing a data block}
	\label{algo:queryOram}
\begin{algorithmic}
	\REQUIRE $\textIndex, \textOp, B_{\mathtt{new}}$

\COMMENT{Download Phase}

	\IF {$\textIndex \in \bStash$}
		\STATE Pick $d$ uniformly at random from $[N]$.
		\STATE Download $A[d]$ and discard.
		\STATE Set $B \leftarrow \bStash[\textIndex] $ and remove $\textIndex$ from $\bStash$.
	\ELSE
		\STATE Set $d\leftarrow i$.
		\STATE Download $A[d]$.
		\STATE Set $B \leftarrow \Dec(K, A[d])$.
	\ENDIF
\item
	\IF {$\textOp = \opWrite$}
		\STATE Set $B \leftarrow B_{\mathtt{new}}$.
	\ENDIF
\item
\item \COMMENT{Overwrite Phase}

	\STATE Draw $r$ uniformly at random from $[N]$.
	\IF {$r \le C$}
		\STATE Set $\bStash[\textIndex] \leftarrow B$.
		\STATE Pick $o$ uniformly at random from $[N]$.
		\STATE Download $A[o]$, decrypt it and re-encrypt it obtaing $\mathtt{ct}$.
		\STATE Upload $\mathtt{ct}$ as $A[o]$.
	\ELSE
		\STATE Set $o=i$.
		\STATE Download $A[o]$ and discard it.
		\STATE Set $\mathtt{ct}\leftarrow\Enc(K, B)$.
		\STATE Upload $\mathtt{ct}$ as $A[o]$.
	\ENDIF
\end{algorithmic}
\end{algorithm}

\end{document}